\newtheorem{thm}{Theorem}[section]
\newtheorem{defn}[thm]{Definition}
\newtheorem{lemma}[thm]{Lemma}
\newtheorem{cor}[thm]{Corollary}
\newtheorem{example}[thm]{Example}
\newcommand{\bmb}{\left( \begin{array}{rr}}
\newcommand{\enm}{\end{array}\right)}
\newcommand{\cM}{\mathcal M}
\newcommand{\hy}{{\widehat y}}
\newcommand{\C}{{\mathbb C}}
\newcommand{\Z}{{\mathbb Z}}
\newcommand{\bm}{{\mathbf m}}
\newcommand{\bx}{{\mathbf x}}
\newcommand{\by}{{\mathbf y}}
\newcommand{\al}{{\alpha}}
\numberwithin{equation}{section}
\begin{document}
\title{Quantum $A_r$ $Q$-system solutions as q-multinomial series}
\author{Philippe Di Francesco} 
\address{
%Department of Mathematics, University of Michigan,
%530 Church Street, Ann Arbor, MI 48190, USA
%and 
Institut de Physique Th\'eorique du Commissariat \`a l'Energie Atomique, 
Unit\'e de Recherche associ\'ee du CNRS,
CEA Saclay/IPhT/Bat 774, F-91191 Gif sur Yvette Cedex, 
FRANCE. e-mail: philippe.di-francesco@cea.fr}

\begin{abstract}
We derive explicit expressions for the generating series of the fundamental solutions
of the $A_r$ quantum $Q$-system of Ref. [P. Di Francesco and R. Kedem, {\em Noncommutative integrability, 
paths and quasi-determinants}, preprint
{\tt arXiv:1006.4774 [math-ph]}], expressed in terms of any 
admissible initial data. These involve products
of quantum multinomial coefficients, coded by the initial data structure.
\end{abstract}

\maketitle
\date{\today}
\tableofcontents

\section{Introduction}

The study of some discrete integrable systems, taking the form of recursion relations, 
i.e. evolution equations in a discrete time $n\in \Z$
with suitable conservation laws, 
has recently shed some new light \cite{DFK08} on the positivity conjecture of cluster algebras \cite{FZI}.
Indeed, admissible sets of initial data for such systems are particular clusters in some specific, non-finite type 
cluster algebras, whereas cluster mutations are implemented by the update of initial data via local 
application of the evolution equation. The positivity conjecture for cluster algebras then boils down 
to the following property:
the solutions of such systems are {\it Laurent polynomials with non-negative integer coefficients} of
any of their admissible initial data.

In \cite{DFK08}, this was proved for the so-called $Q$-system for $A_r$, by expressing solutions as partition
functions for weighted paths on some target graphs, the weights being explicit Laurent monomials of initial data.
An extremely useful tool for such representations is the notion of (multiply branching) continued fraction.
We were able to prove that mutations of initial data are implemented by local rearrangement of the continued fraction
expressions for the generating series of fundamental solutions of the $Q$-system.
This was then extended to the so-called $T$-systems in \cite{DFK09a} for initial data forming  periodic stepped surfaces,
and finally to the most general initial data in \cite{PDF}, by use of a manifestly positive network path formulation.
In all these cases, positivity follows from a form of discrete path integral representation of the solutions.

The cluster algebra structure has a natural quantum version \cite{BZ}, in which cluster variables obey quantum
commutation relations within each cluster. This led to the natural definition of quantum $Q$-systems \cite{DFK10}.
In an analogous spirit, it was shown in \cite{DFK09a} that the $T$-system may be viewed as a $Q$-system,
involving non-commutative (time-ordered) variables.
An even broader non-commutative version is known for the fully non-commutative $A_1$ $Q$-system, 
for which Laurent positivity was conjectured
by M. Kontsevich, and subsequently proved in \cite{DFK09b}. 
The latter proof relies on an extension of the previous path
formulations to paths with non-commutative step weights: the partition function of such paths is the sum over
paths of the product of step weights taken in the same order as the steps are taken. In \cite{DFK10},
such paths were used to investigate non-commutative versions of the $Q$-systems. In particular, a compact
formulation in terms of non-commutative continued fractions was obtained.

Except in very specific cases, very few explicit expressions of cluster variables in terms of fundamental data
are known. For the (classical) $A_r$ $Q$-system, such expressions were derived in \cite{DFK09}
for the generating series of its fundamental solutions.

The aim of this note is to generalize these expressions for the generating series of the fundamental solutions
of the quantum $A_r$ $Q$-system, by using the non-commutative continued fraction
expressions of \cite{DFK10}. The results are summarized in our main Theorem \ref{toe} below, which expresses
these generating functions for any admissible initial data as explicit series with coefficients that are
Laurent polynomials with coefficients in $\Z_+[q,q^{-1}]$, 
where $q$ is the parameter of the quantum deformation.

\noindent{\bf Acknowledgments:} We thank R. Kedem for helpful discussions and the Mathematical Sciences Research
Institute, Berkeley, for hospitality during the program ``Random Matrix Theory, Interacting Particle Systems and
Integrable Systems" (fall 2010) during which this work was initiated.

\section{The quantum $A_r$ $Q$-system: definitions}

\subsection{The system}

Let $\lambda_i=i(r+1-i)$, $i=1,2,...,r$ and $q\in\C^*$. The $A_r$ quantum $Q$-system \cite{DFK10}
is an evolution equation for 
variables
$R_{i,j}$, $i\in [1,r]$ and $j\in Z$, elements of a non-commuting unital algebra:
\begin{equation}\label{clatsys}
q^{\lambda_i} \, R_{i,j+1}R_{i,j-1}=R_{i,j}^2+R_{i+1,j}R_{i-1,j}\qquad (i\in [1,r],j \in \Z)
\end{equation}
with $R_{0,j}=R_{r+1,j}=1$ for all $j\in\Z$.

\subsection{Initial data}
This is a three-term recursion relation in the variable $j$, which allows to determine 
all $R_{i,j}$ in terms of any initial data covering two consecutive values of $j$.
Initial data are indexed by Motzkin paths $\bm=(m_1,...,m_r)$
with $m_{i+1}-m_i\in \{0,1,-1\}$. They read $\bx_\bm=\big(R_{i,m_i},R_{i,m_i+1}\big)_{i\in[1,r]}$,
and are transformed into each-other via (forward/backward) mutations $\mu_i^\pm$
that act on the Motzkin paths via $\big(\mu_i^\epsilon(\bm)\big)_j=m_j+\epsilon \delta_{j,i}$,
$\epsilon=\pm$,
whenever the result is itself a Motzkin path.
The fundamental initial data corresponds to the null Motzkin path $\bm_0=(0,0,...,0)$.

Within each such set of initial data, the variables $R$ obey the following commutation relations:
\begin{equation}\label{qcom}
R_{i,j} R_{k,m}=q^{(m-j)\Lambda_{i,k}} R_{k,m}R_{i,j} 
\end{equation}
where
\begin{equation}\label{lamdef}
\Lambda_{i,k}={\rm Min}(i,k)\big(r+1-{\rm Max}(i,k)\big)
\end{equation}

\subsection{Commuting limit}

Setting $q=1$, in which case all $R_{i,j}$ variables commute, we recover the (commuting) $A_r$ $Q$-system:
\begin{equation}\label{claqsys}
R_{i,j+1}R_{i,j-1}=R_{i,j}^2+R_{i+1,j}R_{i-1,j}\qquad (i\in [1,r],j \in \Z)
\end{equation}
with $R_{0,j}=R_{r+1,j}=1$ for all $j\in\Z$.

\subsection{Quantum cluster algebra for the $A_r$ $Q$-system}

\subsubsection{Cluster algebra and quantum cluster algebra}

A cluster algebra of finite rank $n$ without coefficients is a commuting algebra generated by invertible variables
forming $n$-vectors $\bx(t)=(x_1(t),x_2(t),...,x_n(t))$ attached to the vertices $t$ of an infinite $n$-valent tree
with edges labeled $1,2,...,n$ around each vertex. Vectors $\bx(t)$ and $\bx(u)$ corresponding to vertices $t,u$
connected by an edge labeled $k$ are related by a mutation relation of the form:
\begin{eqnarray}\label{muta}
x_i(u)&=&x_i(u) \qquad {\rm for} \  i\neq k \nonumber \\
x_k(u)x_k(t)&=& \prod_{i=1}^n x_i(t)^{[B_{i,k}(t)]_+} +\prod_{i=1}^n x_i(t)^{[-B_{i,k}(t)]_+}
\end{eqnarray}
where $B(t)$ is an $n\times n$ skew-symmetrizable matrix, called the exchange matrix,
with entries in $\Z$, attached to the vertex $t$,
and subject to the mutation relation
\begin{eqnarray}\label{mutaB}
B_{i,j}(u)&=&-B_{i,j}(t) \qquad {\rm if}\ i=k\, {\rm or}\, j=k \nonumber \\
B_{i,j}(u)&=&B_{i,j}(t)+{\rm sgn}(B_{i,k}(t))\left[ B_{i,k}(t)B_{k,j}(t)\right]_+\qquad {\rm otherwise}
\end{eqnarray}
where $[x]_+={\rm Max}(x,0)$. In the following we'll be dealing only with cluster algebras with
skew-symmetric exchange matrices. In that case, we may represent each matrix $B(t)$ as
a quiver with $n$ vertices corresponding to the cluster variables, and with $B(t)_{i,j}=$
the number of arrows from vertex $i$ to vertex $j$ whenever $B(t)_{i,j}\geq 0$.

A cluster algebra is entirely specified by the pair $(\bx(t_0),B(t_0))$ of cluster variables and exchange matrix
at an initial vertex $t_0$, also called fundamental seed.

The cluster algebras have the Laurent property, that any cluster variable $x_k(t)$ may be expressed
as a Laurent polynomial of the cluster variables at any other vertex $u$ of the tree. It was conjectured in \cite{FZI}
that these polynomials have {\it non-negative} integer coefficients. 

A quantum cluster algebra \cite{BZ} of rank $n$ is a non-commuting version of the former defined as follows. 
Starting from an ordinary cluster algebra data, we introduce an extra $n\times n$ integer matrix $\Lambda(t_0)$,
forming with $B(t_0)$ a ``compatible pair", namely such that $\Lambda(t_0) B(t_0)=d$, a diagonal matrix
with positive integer entries. The matrix $\Lambda(t_0)$ encodes the quantum commutation relations
obeyed by the initial cluster variables $\bx(t_0)$, with $x_i(t_0)x_j(t_0)=q^{\Lambda_{i,j}(t_0)}x_j(t_0)x_i(t_0)$,
where $q$ is a fixed central element of the algebra.
The mutation of cluster variables is defined via an analogous, non-commuting formula, while that of
exchange matrices remains the same \eqref{mutaB}. Compatibility fixes $\Lambda(t)$ for every vertex $t$ as well.

Quantum cluster algebras also satisfy an analogous Laurent property. The positivity conjecture 
claims that the coefficients of the Laurent polynomials belong to $\Z_+[q,q^{-1}]$.

\subsubsection{Cluster algebra for the commuting $A_r$ $Q$-system}

The cluster algebra for the (commuting) $A_r$ $Q$-system \cite{Ke07}  has rank $2r$,
and a fundamental seed made of the cluster
$\bx_0\equiv \bx_{\bm_0}=(R_{1,0},R_{2,0},...,R_{r,0},R_{1,1},...,R_{r,1})$, 
and of the $2r\times 2r$ skew-symmetric exchange matrix 
$B_0\equiv B_{\bm_0}=\begin{pmatrix} 0 & -C\\ C & 0 \end{pmatrix}$, $C$ the Cartan matrix of  
$A_r$, with entries
\begin{equation}
C_{i,j}= 2\delta_{i,j}-\delta_{|j-i|,1}\qquad (i,j\in [1,r])
\end{equation}

All the initial data $\bx_\bm$ are clusters in this cluster algebra. They are obtained from $\bx_0$
via iterated cluster (forward or backward)
mutations of the form $\mu_{i}^{\pm}$ above, namely leaving
all cluster variables unchanged except  
$R_{i,m_i}\to R_{i,m_i+2}=(R_{i,m_i+1}^2+R_{i+1,m_i+1}R_{i-1,m_i+1})/R_{i,m_i}$ ($\mu_i^+$)
or $R_{i,m_i+1}\to  R_{i,m_i-1}=(R_{i,m_i}^2+R_{i+1,m_i}R_{i-1,m_i})/R_{i,m_i+1}$ ($\mu_i^-$), 
when all three terms are cluster variables in the 
original cluster. With our choice of fundamental seed, the first $r$ variables always have even indices $m$
while the $r$ next have odd ones.
The following is a sequence of forward mutations applied successively on
the fundamental initial data $x_{\bm_0}$ in the case $r=3$:
$$ {\epsfxsize=12cm \epsfbox{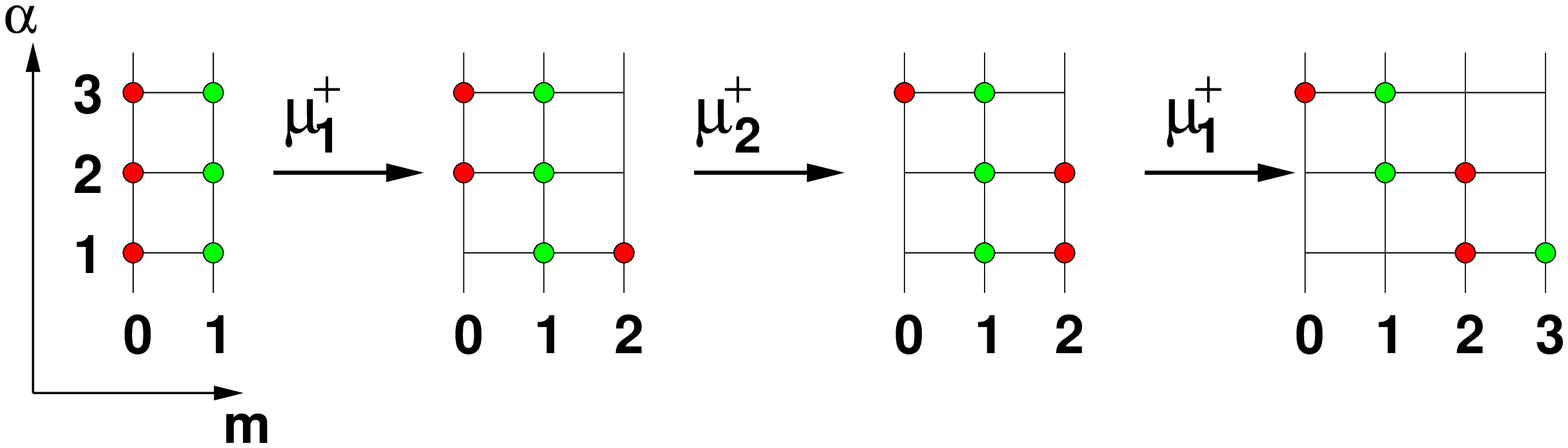}}$$
Here, each dot $(m,\al)\in \Z^2$ represents a cluster variable $R_{\al,m}$, and the corresponding
Motzkin path is the set of leftmost dots in the pairs. 

%\begin{figure}
%\centering
%\includegraphics[width=14.cm]{Bmat}
%\caption{\small }
%\label{fig:Bmat}
%\end{figure}

The (skew-symmetric) exchange matrix $B_\bm$ for each Motzkin path $\bm$ was computed explicitly in \cite{DFK08}.
As explained above, it can be represented as a quiver with $2r$ vertices
indexed by the initial data indices $(m_i,i)$ and $(m_i+1,i)$.
Here's the example for $r=3$:
$$ {\epsfxsize=12cm \epsfbox{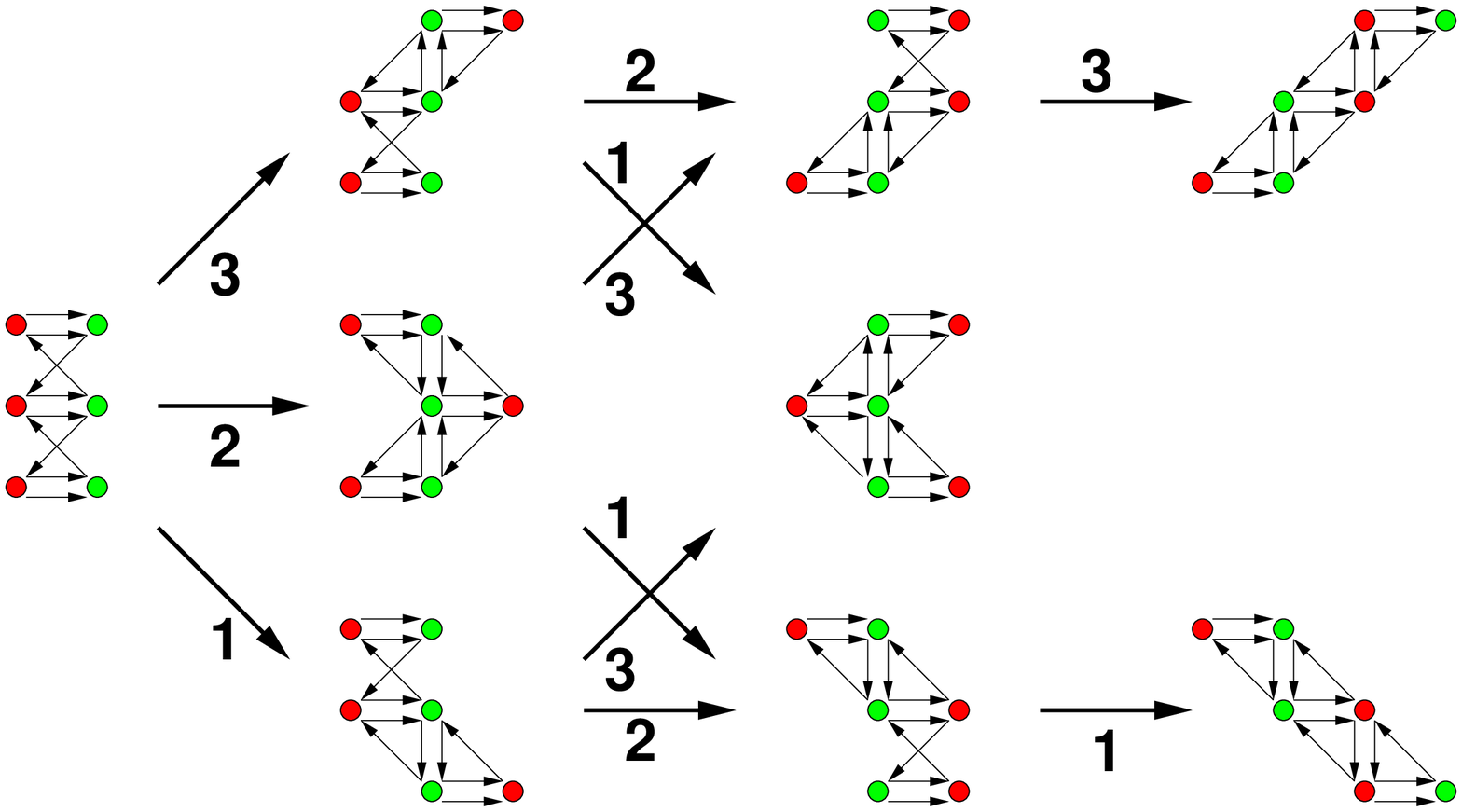}}$$
where we have represented the vertices on the same $\Z^2$ grid $(m,\al)$ as for initial data above, and where arrows
indicate various forward mutations of the corresponding index. Here, we have only represented the exchange
matrices for a fundamental set of Motzkin paths modulo a global translation by $(1,0)$, namely
the set $\cM_r=\{ (m_\al)_{\al\in[1,r]} \vert {\rm Min}_\al(m_\al)=0\}$. The exchange matrix is actually quasi-periodic,
namely: $B_{\bm+1}=-B_\bm$, where we denote by $\bm+1$ the Motzkin path $(m_1+1,m_2+1,...,m_r+1)$.

More generally, the exchange matrix $B_\bm$ is constructed as follows\footnote{This construction is due to R. Kedem.}.
Given the Motzkin path $\bm=(m_\al)$, we simply represent its vertices 
and their translates by the vector $(1,0)$ on the $(m,\al)$ plane, and we represent
either of the three following local arrow configurations, depending on whether the Motzkin path
is locally ascending ($m_{\al+1}=m_\al+1$), flat ($m_{\al+1}=m_\al$), or descending ($m_{\al+1}=m_\al-1$):
$$ {\epsfxsize=10cm \epsfbox{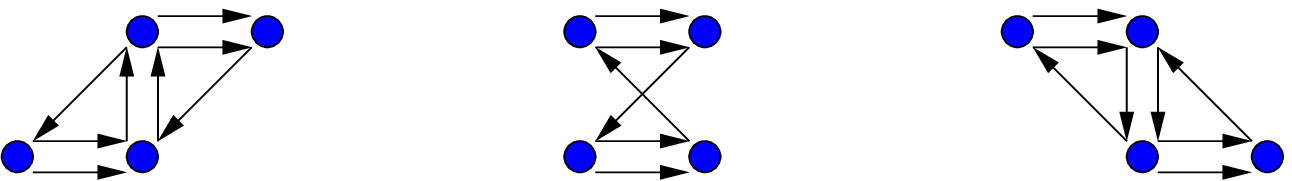}}$$
The resulting quiver encodes the skew-symmetric matrix $B_\bm$.

Finally, let us mention the following Lemma (see Ref.\cite{DFK08} for details), used  crucially in the following.

\begin{lemma}\label{lemposimut}
Any Motzkin path $\bm$ with $m_i\geq 0$ for all $i\in [1,r]$
may be attained from $\bm_0$
via iteration of forward mutations of the form $\mu_i^+$ acting at each 
intermediate step on a Motzkin path $\bm$
in either of the two following local configurations around $i$: 
\begin{itemize}
\item Case (i): $m_{i-1}=m_i=m_{i+1}-1$
\item Case (ii): $m_{i-1}=m_i=m_{i+1}$
\end{itemize}
\end{lemma}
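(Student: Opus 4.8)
The plan is to prove the statement by induction on the total height $|\bm|:=\sum_{i=1}^r m_i$, showing that any nonnegative Motzkin path $\bm\neq\bm_0$ can be obtained from a nonnegative Motzkin path of strictly smaller height by a single forward mutation $\mu_i^+$ applied in configuration (i) or (ii). Since $\bm_0$ has $|\bm_0|=0$ and each $\mu_i^+$ raises $|\bm|$ by exactly $1$, this immediately assembles into a finite sequence of admissible forward mutations realizing $\bm$ from $\bm_0$, with the required number $|\bm|$ of steps.

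First I would translate the two configurations into conditions on the \emph{target} path. Writing $\bm=\mu_i^+(\bm')$ with $\bm'_i=m_i-1$ and $\bm'_j=m_j$ for $j\neq i$, configuration (i) for $\bm'$ (namely $\bm'_{i-1}=\bm'_i=\bm'_{i+1}-1$) becomes $m_{i-1}=m_i-1$ and $m_{i+1}=m_i$, while configuration (ii) ($\bm'_{i-1}=\bm'_i=\bm'_{i+1}$) becomes $m_{i-1}=m_i-1=m_{i+1}$. In both cases $\bm$ strictly ascends into $i$ from the left and is either flat (case (i)) or descending (case (ii)) to the right, so $i$ must be a local maximum whose left neighbor sits one unit below. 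The boundary indices $i=1$ and $i=r$ are handled by the convention that an absent neighbor $m_0$ or $m_{r+1}$ imposes no condition, consistent with $R_{0,j}=R_{r+1,j}=1$; concretely, at $i=1$ the two configurations read $m_2=m_1+1$ and $m_2=m_1$, while at $i=r$ both reduce to the single left-ascent condition $m_{r-1}=m_r-1$.

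The descent step is the combinatorial core. Let $M=\max_j m_j\geq 1$ and let $i^\ast$ be the \emph{smallest} index with $m_{i^\ast}=M$. If $i^\ast>1$ then, since $M$ is maximal and consecutive differences lie in $\{-1,0,1\}$, the left neighbor satisfies $m_{i^\ast-1}=M-1$, so the required left ascent holds; the right neighbor, when present, satisfies $m_{i^\ast+1}\in\{M-1,M\}$, giving case (ii) or case (i) respectively. If $i^\ast=1$ the left condition is vacuous and the same dichotomy on $m_2\in\{M-1,M\}$ applies. In every instance, decreasing $m_{i^\ast}$ by $1$ preserves nonnegativity (as $M\geq 1$) and the Motzkin property, and yields the pre-image $\bm'$ in configuration (i) or (ii) at $i^\ast$, closing the induction.

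I expect the main obstacle to be exactly the choice of which maximum to peel off: one cannot undo a forward mutation at an \emph{arbitrary} index of height $M$, since reducing an interior point of a maximal plateau would create a valley $(M,M-1,M)$ that matches neither configuration. Selecting the leftmost argmax — equivalently, the left edge of the top plateau — is what forces the left ascent while leaving the right side flat or descending, so that one of the two admissible configurations always appears. A secondary point needing care is the bookkeeping at the boundary, where a vacuous neighbor must be checked against the intended reading of (i) and (ii); in particular at $i^\ast=r$ the two configurations collapse to the common condition $m_{r-1}=m_r$ on the pre-image, and the doubly degenerate case $r=1$ is trivial. I would dispatch these as limiting cases of the same argument rather than treat them separately.
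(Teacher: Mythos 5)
The paper never proves this lemma itself: it is quoted with a pointer to Ref.~\cite{DFK08} (``see Ref.~\cite{DFK08} for details''), so there is no internal proof to compare yours against; your argument has to stand on its own, and it does. The core reduction is right: translating configurations (i) and (ii) to the \emph{target} path shows a forward mutation can only be undone at an index that strictly ascends from the left and is flat or descending to the right, and the leftmost point of the top plateau is precisely such an index --- your observation that an interior plateau point would leave a forbidden valley $(M,M-1,M)$ correctly identifies why the leftmost argmax (rather than an arbitrary one, or the rightmost, which can have a plateau on its left) must be chosen, since the pre-image configurations are left-right asymmetric (flat on the left in both cases). All verifications go through: the peeled path is again a Motzkin path (the new differences at $i^\ast$ are $0$ and $0$ or $1$), it stays nonnegative since $M\geq 1$, it sits in configuration (i) or (ii) at $i^\ast$ according to whether $m_{i^\ast+1}=M$ or $M-1$, and the height induction assembles the mutation sequence of length $\sum_i m_i$. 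Your boundary convention (absent neighbors $m_0$, $m_{r+1}$ impose no condition) is also the reading forced by the statement itself --- under the alternative convention $m_0=m_{r+1}=0$ paths with $m_1\geq 2$ would be unreachable and the lemma false --- and it matches the paper's implicit usage, e.g.\ the mutations $\mu_1^+$ and $\mu_3^+$ appearing between the quivers of Fig.~\ref{fig:athreecomm}. One cosmetic point: in the boundary discussion you state the $i=1$ conditions in pre-image variables ($m_2=m_1+1$, $m_2=m_1$) but the $i=r$ condition in target variables ($m_{r-1}=m_r-1$); this is harmless but should be made uniform to avoid confusing a reader.
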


\subsubsection{Quantum cluster algebra for the quantum $A_r$ $Q$-system}

The quantum cluster algebra corresponding to our quantum $Q$-system \eqref{clatsys}
has the fundamental seed
$\bx_0=((R_{i,0})_{i=1}^r,(R_{i,1})_{i=1}^r)$, and the same exchange matrix $B_0$ as in the commuting
case. The commutation relations \eqref{qcom} correspond to taking the initial compatible
pair $(\Lambda_0,B_0)$ such that $\Lambda_0=(r+1)B_0^{-1}$. The compatibility implies
that $\Lambda_\bm=(r+1)B_\bm^{-1}$ for all Motzkin paths $\bm$, and we have
explicitly
$ (\Lambda_\bm)_{(i,m),(j,p)}=(p-m)\Lambda_{i,j} $ in terms of the matrix $\Lambda$ of eq.\eqref{lamdef},
for all pairs $(i,m)$ and $(j,p)$ of cluster indices in $\bx_\bm$, which leads to the commutations \eqref{qcom}.

\section{Quantum system solution for $R_{1,n}$ via continued fractions}

\subsection{Generating functions}

We set $R_n\equiv R_{1,n}$.
To each Motzkin path $\bm=(m_1,m_2,...,m_r)$, we associate the generating function 
\begin{equation}\label{Fmdef}
F_\bm(t)=\sum_{n=0}^\infty t^n R_{n+m_1}R_{m_1}^{-1} 
\end{equation}
We also use the ``rerooted" generating function
\begin{equation}\label{Gmdef}
G_\bm(t)=\sum_{n=0}^\infty t^n R_{n+m_1+1}R_{m_1+1}^{-1} 
\end{equation}

\subsection{Continued fraction expressions}

\begin{figure}
\centering
\includegraphics[width=10.cm]{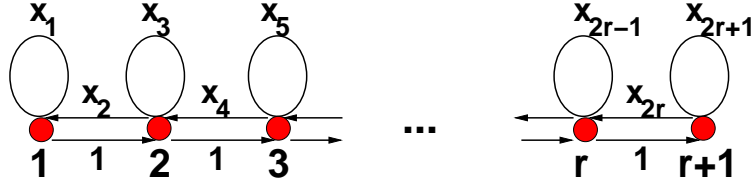}
\caption{\small The weighted graph $\Gamma_r$, with $r+1$ vertices labeled $1,2,...,r+1$.
For each oriented edge $e$, we have indicated the corresponding step weight $w(e)$, namely: $w(i\to i)=x_{2i-1}$
($i=1,2,...,r+1$),
$w(i\to i+1)=x_{2i}$, $w(i+1\to i)=1$ ($i=1,2,...,r$).}
\label{fig:gammar}
\end{figure}

For some variables $x_1,x_2,...,x_{2r+1}$ elements of a non-commuting algebra,
let $J_1(x_1,x_2,...,x_{2r+1})$ be the ``non-commutative Jacobi-type (finite) continued fraction" defined inductively by
\begin{eqnarray*}
J_s(x_{2s-1},x_{2s},...,x_{2r+1})&=&\big(1-x_{2s-1}-J_{s+1}(x_{2s+1},x_{2s+2},...,x_{2r+1})x_{2s}\big)^{-1}
\quad (1\leq s\leq r+1)\\
J_{r+2}&=&0
\end{eqnarray*}
For some central scalar parameter $t$,
the function $J_1(tx_1,...,tx_{2r+1})$, expanded as a formal power series of $t$, 
may be interpreted combinatorially as the generating series 
for ``quantum" paths on the weighted graph $\Gamma_r$ depicted in Fig.\ref{fig:gammar}, 
from and to the origin vertex $1$,
each path being weighted by the product of its step weights taken in the order they are traversed
and with an extra weight $t$ per step along each edge pointing toward the origin.

\begin{defn}
To each Motzkin path $\bm$ we attach a sequence of ``weights"
$\by(\bm)=(y_1(\bm),...,y_{2r+1}(\bm))$ via the following induction
under forward mutation $\mu_i^+(\bm)=\bm'$, depending on whether $\bm$ is
in cases (i) or (ii) of Lemma \ref{lemposimut} above. For short we write $y_i(\bm)=y_i$ and $y_i(\bm')=y_i'$.
First, we have $y_j'=y_j$ for $j\neq 2i-1,2i,2i+1$ (case (i)) and
$j\neq 2i-1,2i,2i+1,2i+2$ (case (ii)), while:
\begin{eqnarray}
&&{\rm \bf Cases\  (i)\  and\  (ii):} \left\{ \begin{matrix} y_{2i-1}'=y_{2i-1}+y_{2i}\hfill \\
\quad y_{2i}'=y_{2i+1}y_{2i}(y_{2i-1}+y_{2i})^{-1}\hfill \nonumber \\
y_{2i+1}'=y_{2i+1}y_{2i-1}(y_{2i-1}+y_{2i})^{-1}\hfill
\end{matrix} \right. \nonumber \\
&& \qquad \qquad \ \  {\rm \bf Case (ii):}\ \ \  y_{2i+2}'=y_{2i+1}y_{2i-1}(y_{2i-1}+y_{2i})^{-1} \label{recuy}
\end{eqnarray}
\end{defn}

This determines the $y$'s entirely in terms of the initial data $\by(\bm_0)$. We have the following
Theorems.

\begin{thm}\label{ronesol}(\cite{DFK10})
For the fundamental initial data with $\bm=\bm_0=(0,0,...,0)$,
the solution to the quantum $A_r$ $Q$-system satisfies the following identity:
\begin{equation}\label{fraczero}
F_{\bm_0}(t)=1+tG_{\bm_0}(t)y_1(\bm_0), \quad G_{\bm_0}(t)=J_1(ty_1(\bm_0),...,ty_{2r+1}(\bm_0))
\end{equation}
where 
\begin{equation}\label{yzero}
y_{2i-1}(\bm_0)=R_{i,1}R_{i-1,1}^{-1}R_{i,0}^{-1} R_{i-1,0}
\qquad y_{2i}(\bm_0)=R_{i+1,1}R_{i,1}^{-1}R_{i,0}^{-1}R_{i-1,0} 
\end{equation}
\end{thm}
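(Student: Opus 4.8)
The plan is to prove the two parts of \eqref{fraczero} separately, treating the relation $F_{\bm_0}=1+tG_{\bm_0}y_1$ as pure bookkeeping and concentrating the real work on the continued-fraction expression for $G_{\bm_0}$.

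First I would dispatch $F_{\bm_0}(t)=1+tG_{\bm_0}(t)\,y_1(\bm_0)$ straight from the definitions \eqref{Fmdef} and \eqref{Gmdef}. Since $m_1=0$ for $\bm_0$, one has $F_{\bm_0}(t)=\sum_{n\geq 0}t^nR_nR_0^{-1}$; peeling off the $n=0$ term (equal to $1$), reindexing $n\mapsto n+1$, and inserting $R_{n+1}R_0^{-1}=(R_{n+1}R_1^{-1})(R_1R_0^{-1})$ yields $F_{\bm_0}(t)=1+tG_{\bm_0}(t)\,R_1R_0^{-1}$. Because $R_{0,j}=1$ for all $j$, the case $i=1$ of \eqref{yzero} reduces to $y_1(\bm_0)=R_{1,1}R_{1,0}^{-1}=R_1R_0^{-1}$, which closes this step.

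For the continued fraction I would exploit the combinatorial interpretation of $J_1$ recalled in the text: with arguments $ty_k(\bm_0)$, the series $J_1(ty_1(\bm_0),\dots,ty_{2r+1}(\bm_0))$ is the generating function $\sum_{n\geq 0}t^n P_n$, where $P_n$ is the sum over closed paths of length $n$ based at vertex $1$ of $\Gamma_r$ of the product of their step weights in traversal order, with $w(i\to i)=y_{2i-1}(\bm_0)$, $w(i\to i+1)=y_{2i}(\bm_0)$ and $w(i+1\to i)=1$. Since $G_{\bm_0}(t)=\sum_{n\geq 0}t^nR_{n+1}R_1^{-1}$, the claimed identity $G_{\bm_0}=J_1$ is equivalent to the term-by-term statement $R_{n+1}R_1^{-1}=P_n$, i.e. to a path-sum representation of the fundamental solution $R_{1,n}$.

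The heart of the argument is thus to prove $R_{n+1}R_1^{-1}=P_n$ together with the explicit weights \eqref{yzero}. I would derive from the $Q$-system \eqref{clatsys} the linear recursion of order $r+1$ satisfied by the sequence $(R_{1,n})_n$, whose coefficients are conserved quantities (independent of $n$) built from the $y_i(\bm_0)$, and show that the path sums $P_n$ obey the same recursion via the first-return decomposition encoded in the Jacobi recursion $J_s=(1-x_{2s-1}-J_{s+1}x_{2s})^{-1}$; matching initial data then forces $R_{n+1}R_1^{-1}=P_n$ and pins down \eqref{yzero} for $i\geq 2$ by comparison of coefficients. The main obstacle throughout is non-commutativity: the factorization in the first step, the first-return decomposition, and the matching of recursions must all preserve the ordering of the $R_{i,j}$ and reproduce exactly the powers of $q$ dictated by \eqref{qcom}. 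Concretely, the crux will be to check that the weights $y_i(\bm_0)$ satisfy the commutation relations under which the non-commutative continued fraction $J_1$ reproduces the ordered resolvent of the tridiagonal weight matrix of $\Gamma_r$ verbatim, and that the $q$-factors accumulated along paths agree on both sides—this is precisely where the quantum deformation makes the argument more delicate than its $q=1$ counterpart.
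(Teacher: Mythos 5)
A preliminary remark on the comparison itself: the paper offers no proof of Theorem \ref{ronesol} to compare against --- the theorem is imported wholesale from \cite{DFK10} (hence the citation in its statement), where it is established through a noncommutative path/quasi-determinant analysis. Your first step, however, is complete and correct on its own: since $m_1=0$, peeling the $n=0$ term off \eqref{Fmdef}, writing $R_{n+1}R_0^{-1}=(R_{n+1}R_1^{-1})(R_1R_0^{-1})$, and noting that $R_{0,j}=1$ makes the $i=1$ case of \eqref{yzero} read $y_1(\bm_0)=R_{1,1}R_{1,0}^{-1}=R_1R_0^{-1}$, does yield $F_{\bm_0}(t)=1+tG_{\bm_0}(t)y_1(\bm_0)$; the constant factor correctly exits the sum on the right. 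Your reading of the combinatorial meaning of $J_1(ty_1,\dots,ty_{2r+1})$ as $\sum_n t^n P_n$ over closed paths based at vertex $1$ of $\Gamma_r$ is also faithful to the text.

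The genuine gap is that everything you label ``the heart of the argument'' is announced rather than proved, and it is precisely there that the theorem lives. You would need: (a) the existence of a linear recursion of order $r+1$ for $(R_{1,n})$ with $n$-independent coefficients --- in the noncommutative setting this ``quantum integrability'' is a nontrivial theorem (the quantum analogue of the conserved-quantity statement underlying \cite{DFK08}), and since those conserved coefficients do not commute with the $R_{i,j}$, one must determine on which side they multiply and with which powers of $q$; (b) the identification of those coefficients with explicit ordered polynomials in the weights \eqref{yzero}, which is where the specific form of \eqref{yzero} and the commutation relations \eqref{qcom} (equivalently Theorem \ref{mzer}) actually enter; (c) a proof that the first-return decomposition of path sums with noncommuting step weights reproduces exactly the bracketing $J_s=(1-x_{2s-1}-J_{s+1}x_{2s})^{-1}$, i.e.\ with the up-step weight placed on the right; and (d) the claim implicit in your phrase ``pins down \eqref{yzero} for $i\geq 2$ by comparison of coefficients,'' namely that a noncommutative $J$-fraction expansion of a given series is unique --- itself a quasi-determinant theorem in this setting, not a formality. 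Items (a)--(d) constitute essentially the entire content of the result as proved in \cite{DFK10}; your proposal correctly names the strategy and correctly identifies the crux, but defers all of its substance, so as it stands it is a plan for a proof rather than a proof.
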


\begin{thm}\label{ysol}(\cite{DFK10})
For $\by(\bm_0)=(y_1(\bm_0),...,y_{2r+1}(\bm_0))$ given by \eqref{yzero}, the solution of the 
recursion \eqref{recuy} reads:
\begin{eqnarray}
\qquad y_{2i-1}(\bm)&=&
q^{i-1} \, R_{i,m_i+1}R_{i,m_i}^{-1} 
 R_{i-1,m_{i-1}}R_{i-1,m_{i-1}+1}^{-1}
\label{oddyq} \\
y_{2i}(\bm)&=&\left\{  \begin{matrix} 
R_{i+1,m_{i+1}+1}R_{i+1,m_{i+1}}^{-1}
R_{i+1,m_{i}}R_{i+1,m_{i}+1}^{-1}
& {\rm if} \ m_i=m_{i+1}+1\\
{1} & {\rm otherwise} 
\end{matrix}\right\}\nonumber \\
&&\times\, 
 R_{i+1,m_i+1}R_{i,m_i+1}^{-1}
 R_{i,m_i}^{-1}R_{i-1,m_i} \nonumber \\
&& \times \left\{ \begin{matrix} 
R_{i-1,m_{i}+1}R_{i-1,m_{i}}^{-1}
R_{i-1,m_{i-1}}R_{i-1,m_{i-1}+1}^{-1}
& {\rm if} \ m_i=m_{i-1}-1\\
{1} & {\rm otherwise} 
\end{matrix}\right\}\label{evenyq} 
\end{eqnarray}
\end{thm}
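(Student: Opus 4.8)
The plan is to establish \eqref{oddyq}--\eqref{evenyq} by induction on the length of a sequence of forward mutations $\mu_i^+$ carrying $\bm_0$ to $\bm$; Lemma \ref{lemposimut} guarantees that such a sequence exists for every $\bm$ with nonnegative entries, with each step occurring in one of the two local configurations (i) or (ii). Because the recursion \eqref{recuy} propagates $\by$ deterministically from $\by(\bm_0)$ along any such sequence, it suffices to check two things: that the proposed closed forms reduce to \eqref{yzero} at $\bm=\bm_0$, and that they are preserved by a single elementary step of \eqref{recuy}. Since the right-hand sides of \eqref{oddyq}--\eqref{evenyq} are manifestly functions of $\bm$ alone, verifying the one-step stability also certifies that $\by(\bm)$ is path-independent and hence well-defined.

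The base case is a direct substitution: at $\bm_0$ every branch condition $m_i=m_{i+1}+1$ and $m_i=m_{i-1}-1$ fails, so \eqref{evenyq} collapses to its middle line and, together with \eqref{oddyq} and a single application of \eqref{qcom} to position the factors, reproduces \eqref{yzero} including the prefactor $q^{i-1}$. For the inductive step write $\bm'=\mu_i^+(\bm)$, so that $m_i'=m_i+1$ and the mutation trades the pair $(R_{i,m_i},R_{i,m_i+1})$ for $(R_{i,m_i+1},R_{i,m_i+2})$; the new variable is eliminated through the $Q$-system relation \eqref{clatsys} taken at time $j=m_i+1$, giving $R_{i,m_i+2}=q^{-\lambda_i}\bigl(R_{i,m_i+1}^2+R_{i+1,m_i+1}R_{i-1,m_i+1}\bigr)R_{i,m_i}^{-1}$. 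I would first verify $y_{2i-1}'=y_{2i-1}+y_{2i}$: substituting \eqref{oddyq} for $y_{2i-1}(\bm')$ and inserting this expression splits the left-hand side into two $R$-monomials. Reordering noncommuting factors by \eqref{qcom} and using $\Lambda_{i,i}=\lambda_i$ identifies the first monomial with $y_{2i-1}$, while the second collapses onto $y_{2i}$ precisely because the accumulated $q$-power vanishes, i.e. $i-1-\lambda_i+\Lambda_{i-1,i}-\Lambda_{i-1,i-1}+\Lambda_{i,i}=0$, an identity read off directly from \eqref{lamdef}. Note that in both configurations (i) and (ii) the weight $y_{2i}$ is in generic form, so this relation is verified uniformly, matching the fact that \eqref{recuy} lists it for cases (i) and (ii) together.

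The relations $y_{2i}'=y_{2i+1}y_{2i}(y_{2i-1}+y_{2i})^{-1}$ and $y_{2i+1}'=y_{2i+1}y_{2i-1}(y_{2i-1}+y_{2i})^{-1}$, together with the extra $y_{2i+2}'$ in case (ii), are handled in the same spirit, now using $(y_{2i-1}+y_{2i})^{-1}=(y_{2i-1}')^{-1}$ in the closed form just established. After substituting \eqref{oddyq}--\eqref{evenyq} and the expression for $R_{i,m_i+2}$, each identity reduces to a telescoping cancellation of $R$-factors whose residual $q$-powers are governed by analogous relations among the $\Lambda_{i,k}$ of \eqref{lamdef}. Here the two configurations genuinely diverge: the odd weight $y_{2i+1}$ depends on $m_{i+1}$, which differs in (i) and (ii), and this is exactly what forces the additional mutation of $y_{2i+2}$ in case (ii).

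The principal obstacle is the noncommutative bookkeeping, entangled with the fact that $\mu_i^+$ changes the local ascending/flat/descending shape of $\bm$ around $i$ and thereby toggles the branch conditions controlling the optional factors in \eqref{evenyq}. I would organize the inductive step as an explicit case analysis over (i) and (ii), checking that: (a) for the unchanged neighbors $y_{2(i-1)}$ and (in case (i)) $y_{2(i+1)}$ the closed form develops no spurious dependence on the eliminated variable $R_{i,m_i}$, so that the stability $y_j'=y_j$ is automatic; and (b) for the weights that do change, the optional factors appearing or disappearing in \eqref{evenyq}---for instance the third branch of $y_{2(i+1)}$ switching on in case (ii) when the link $i\to i+1$ becomes descending---match exactly the $R$-monomials produced by the $Q$-system substitution. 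Once the handful of $q$-power identities among $\lambda_i$ and $\Lambda_{i,k}$ are checked in each configuration, the remaining manipulations are routine.
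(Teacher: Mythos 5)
First, a point of order: this paper never proves Theorem \ref{ysol} --- it is imported verbatim from \cite{DFK10} (hence the citation in the theorem header), so there is no internal proof to compare your attempt against, and a self-contained verification has independent value. Your strategy is the natural one: induct along a mutation sequence supplied by Lemma \ref{lemposimut}, check the base case against \eqref{yzero}, check one-step stability of the closed forms under \eqref{recuy} with $R_{i,m_i+2}$ eliminated via \eqref{clatsys}, and note that stability of a closed form depending on $\bm$ alone gives path-independence for free. The parts you actually execute are correct: at $\bm_0$ both optional factors of \eqref{evenyq} are absent, and reordering the monomial of \eqref{yzero} into the order of \eqref{oddyq} via \eqref{qcom} produces exactly $q^{\Lambda_{i-1,i-1}-\Lambda_{i-1,i}}=q^{i-1}$; likewise your exponent identity $i-1-\lambda_i+\Lambda_{i-1,i}-\Lambda_{i-1,i-1}+\Lambda_{i,i}=0$ is precisely what makes $y_{2i-1}'=y_{2i-1}+y_{2i}$ close, and your remark that $y_{2i}(\bm)$ is in generic form in both cases (i) and (ii) is right.

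However, the step you defer with ``handled in the same spirit'' hides a genuine obstruction: the case (ii) line of \eqref{recuy} as printed, $y_{2i+2}'=y_{2i+1}y_{2i-1}(y_{2i-1}+y_{2i})^{-1}$, is \emph{not} satisfied by the closed forms \eqref{oddyq}--\eqref{evenyq}, so a faithful verification of it must fail. Indeed, with $m_{i-1}=m_i=m_{i+1}=m$, the mutation $\mu_i^+$ switches on the third optional factor of $y_{2(i+1)}$ (the link $(i,i+1)$ becomes descending, exactly the mechanism you describe), and a direct computation with \eqref{evenyq} gives
\begin{equation*}
y_{2i+2}(\bm')=y_{2i+2}(\bm)\,R_{i,m+1}R_{i,m}^{-1}R_{i,m+1}R_{i,m+2}^{-1}
=y_{2i+2}(\bm)\,y_{2i-1}(\bm)\big(y_{2i-1}(\bm)+y_{2i}(\bm)\big)^{-1},
\end{equation*}
whereas the printed right-hand side equals $q^{i}R_{i+1,m+1}R_{i+1,m}^{-1}R_{i,m+1}R_{i,m+2}^{-1}$ (i.e.\ it coincides with $y_{2i+1}(\bm')$) and contains no $R_{i+2,\cdot}$ factors, while $y_{2i+2}(\bm')$ manifestly does. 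The resolution is that \eqref{recuy} carries a typo: in the case (ii) line, $y_{2i+1}$ should read $y_{2i+2}$. This is corroborated by the paper's own proof of Theorem \ref{ycom}, which in case (ii) manipulates precisely the product $y_{2i+2}y_{2i-1}$ against $(y_{2i-1}+y_{2i})$. So your plan does go through, but only after noticing and repairing this inconsistency; as written, your inductive step would grind to a halt on the $y_{2i+2}'$ check, and a proof that claims the printed relation holds would be proving a false statement.
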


\begin{thm}\label{yzer}(\cite{DFK10})
For the $y_i(\bm)$ as in Theorem \ref{ysol},
we have:
$$F_\bm(t)=1+t G_\bm(t)y_1(\bm), \quad  G_\bm(t)=J_1(\hy_1(\bm),...,\hy_{2r+1}(\bm)) $$
where $\hy_i\equiv \hy_i(\bm)$ are related to $y_i\equiv y_i(\bm)$ via:
\begin{equation}\left\{
\begin{matrix}
\hy_{2i-1}=t y_{2i-1}\hfill & \hy_{2i}=t y_{2i} \hfill & {\rm if}\ m_{i+1}=m_i \hfill \\
\hy_{2i-1}=t (y_{2i-1}+y_{2i})\hfill & \hy_{2i}=t^2 y_{2i+1}y_{2i}\hfill & {\rm if}\ m_{i+1}=m_i+1 \hfill \\
\hy_{2i-1}=t y_{2i-1}-y_{2i+1}^{-1}y_{2i}\hfill & \hy_{2i}=y_{2i+1}^{-1}y_{2i}
\hfill & {\rm if}\ m_{i+1}=m_i-1 \hfill 
\end{matrix} \right .\qquad (i\in [1,2r+1])
\end{equation}
\end{thm}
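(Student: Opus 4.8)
The plan is to prove the identity by induction on the number of forward mutations needed to build $\bm$ from the fundamental path $\bm_0$, using Lemma \ref{lemposimut} to guarantee that every path with $m_i\geq 0$ is reachable through steps in cases (i) or (ii) (the general case then following from translation-covariance of both sides under $\bm\to\bm+1$). The base case $\bm=\bm_0$ is immediate: the null path is flat at every step, so only the first line $\hy_{2i-1}=ty_{2i-1}$, $\hy_{2i}=ty_{2i}$ of the dressing applies, and the asserted formula reduces word for word to Theorem \ref{ronesol}.

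For the inductive step I would set $\bm'=\mu_i^+(\bm)$ and compare the two continued-fraction expressions. The key observation is that $F_\bm$ and $G_\bm$, as defined in \eqref{Fmdef}--\eqref{Gmdef}, depend only on $m_1$ and are therefore \emph{unchanged} by any mutation $\mu_i^+$ with $i\geq 2$; for those $i$ it thus suffices to prove the purely formal identity $J_1(\hy(\bm))=J_1(\hy(\bm'))$. Next I would localize the problem: by the recursion \eqref{recuy} the underlying weights $y_j$ change only for $j\in\{2i-1,2i,2i+1\}$ (case (i)) or $j\in\{2i-1,2i,2i+1,2i+2\}$ (case (ii)), while the local shape of $\bm$ — which selects the dressing case — changes only at the steps $i-1$ and $i$. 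Tracking both effects shows that $\hy(\bm)$ and $\hy(\bm')$ agree outside a short window of indices around $2i$; in particular the outer levels $J_1,\dots,J_{i-2}$ of the fraction are built from identical entries and may be stripped off, reducing the claim to the invariance of the tail $J_{i-1}$.

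It then remains to prove a local rearrangement lemma: $J_{i-1}$ evaluates to the same element of the algebra whether computed from $\hy(\bm)$ or from $\hy(\bm')$. I would establish this by writing out the affected resolvents $J_{i-1},J_i,J_{i+1}$ — sitting above the common, unchanged tail $J_{i+2}$ — in both pictures, substituting the explicit case-by-case dressings together with the mutation formulas \eqref{recuy}, and re-associating until the old resolvents collapse into the new ones. In case (i) this collapse is driven by the invariance of $\hy_{2i-1}$ (both pictures give $t(y_{2i-1}+y_{2i})$); the real content of the lemma is that the factors $(y_{2i-1}+y_{2i})^{-1}$ produced by \eqref{recuy} recombine exactly with the ascending and descending dressing weights $t^2y_{2i+1}y_{2i}$ and $y_{2i+1}^{-1}y_{2i}$, with case (ii) handled analogously.

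The main obstacle is non-commutativity. Re-associating the resolvents forces one to move factors such as $(y_{2i-1}+y_{2i})^{-1}$ past the neighbouring weights and past the tail fraction $J_{i+2}$, and each transposition produces a power of $q$ dictated by \eqref{qcom}; the identity holds only because these $q$-factors cancel precisely. I would therefore first derive, from \eqref{qcom} and the explicit forms \eqref{oddyq}--\eqref{evenyq}, the full set of $q$-commutation relations among the $y_j(\bm)$, since these govern every reordering in the rearrangement lemma, and I expect the quadratic shape of the ascending dressing to be exactly calibrated to absorb them. Finally the boundary case $i=1$, where $m_1$ does change and $G_\bm\neq G_{\bm'}$, must be treated separately by using the relation $F_\bm=1+tG_\bm y_1(\bm)$ to pass between the two rerootings.
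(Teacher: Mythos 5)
First, a point of comparison: the paper you are working from does not prove Theorem \ref{yzer} at all --- it is imported, together with Theorems \ref{ronesol} and \ref{ysol}, from \cite{DFK10}, so there is no in-paper argument to measure your proposal against. That said, your overall strategy --- induction on forward mutations via Lemma \ref{lemposimut}, the observation that $F_\bm$ and $G_\bm$ depend only on $m_1$ so that for $i\geq 2$ the inductive step reduces to the purely algebraic invariance $J_1(\hy(\bm))=J_1(\hy(\bm'))$, localization to the window of levels affected by the mutation with the unchanged outer levels $J_1,\dots,J_{i-2}$ stripped off, and a separate treatment of $i=1$ --- is the natural route and is the one followed in that reference. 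The base case (for $\bm_0$ every step is flat, so the dressing reduces to $\hy_j=ty_j$ and the claim is exactly Theorem \ref{ronesol}) is also correct.

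However, as written the proposal has a genuine gap: the ``local rearrangement lemma'' to which you reduce everything is never proved, and it is not scaffolding --- it \emph{is} the theorem. The specific dressings in the three cases, in particular the quadratic ascending weight $\hy_{2i}=t^2y_{2i+1}y_{2i}$ and the descending subtraction $\hy_{2i-1}=ty_{2i-1}-y_{2i+1}^{-1}y_{2i}$, are precisely what must be shown to leave $J_{i-1}$ invariant once the weights are replaced via \eqref{recuy}; in the non-commutative setting this verification requires explicit rearrangements of resolvents (of the type \eqref{reart}) combined with the $q$-commutations of the $y_j(\bm)$ (which, incidentally, you need not rederive: they are the paper's Theorem \ref{ycom}, itself proved by induction on mutations). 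Writing that you ``expect the quadratic shape of the ascending dressing to be exactly calibrated'' to absorb the $q$-factors defers exactly the point at issue. The boundary case $i=1$ is a second unresolved gap: there $G_{\bm'}\neq G_\bm$; rather $F_{\bm'}=G_\bm$, so what must be established is the rerooting identity $J_1(\hy(\bm))=1+t\,J_1(\hy(\bm'))\,y_1(\bm')$, a nontrivial non-commutative continued-fraction identity that your one-sentence remark does not supply. Until these two computations are carried out (and the translation-covariance reduction to $m_i\geq 0$ is checked to be compatible with the commutation relations \eqref{qcom}, which it is, since they depend only on index differences), what you have is a correct and well-organized plan, not a proof.
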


\begin{example}\label{exzero}
For $\bm=\bm_0=(0,0,...,0)$, we have $\hy_i=t y_i$ for $i=1,2,...,2r+1$. The generating function
reads
$$G_{\bm_0}(t)=\big(1-ty_1-t\big(1-t y_3-t(... (1-ty_{2r-1}-t(1-ty_{2r+1})^{-1}y_{2r})^{-1} ...)^{-1}y_4)^{-1}y_2\big)^{-1}$$
\end{example}
\begin{example}\label{exstiel}
For $\bm=\bm_1=(0,1,...,r-1)$, we have $\hy_{2i-1}=t (y_{2i-1}+y_{2i})$ and $\hy_{2i}=t^2y_{2i+1}y_{2i}$
for all $i$, where $y_i\equiv y_i(\bm_1)$. The result is simplest for $F_{\bm_1}(t)=1+tG_{\bm_1}(t)y_1$.
The generating function may be rearranged using the following identity at each step:
\begin{equation}\label{reart} a+b +(1-c-u)^{-1} cb =a+(1-c-u)^{-1}(1-u)b=a+(1-(1-u)^{-1}c)^{-1}b\end{equation}
The result reads
$$F_{\bm_1}(t)=\big(1-t\big(1-t (... (1-t(1-ty_{2r+1})^{-1}y_{2r})^{-1} ...)^{-1}y_2)^{-1}y_1\big)^{-1}$$
This is easily expressed in terms of the ``non-commutative Stieltjes-type (finite) continued fraction"
defined inductively as
$$ S_k(x_k,x_{k+1},...,x_{2r+1})=(1-S_{k+1}(x_{k+1},...,x_{2r+1})x_k)^{-1} \ (1\leq k\leq 2r+1), \ S_{2r+2}=1$$
via:
$ F_{\bm_1}(t)=S_1(ty_1,ty_2,...,ty_{2r+1})$.
\end{example}

In view of the Example \ref{exstiel}, we may write another (mixed Stieltjes-Jacobi-type) continued
fraction expression for $G_\bm(t)$ for arbitrary $\bm$. This will be crucially used in the following.

Any Motzkin path $\bm=(m_1,...,m_r)$ may be decomposed into strictly ascending segments
of the form $(m,m+1,...,m+k-1)$ separated by weakly descending steps of the form $(m,m)$ or $(m,m-1)$.
Accordingly, we may transform the Jacobi-type fraction for $G_\bm(t)$ in Theorem \ref{yzer}, by
``undoing" the pieces of the continued fraction that correspond to the strictly ascending
segments of $\bm$. 

To this end, assume that we have a strictly ascending
segment $(m_{i_0},m_{i_0+1},...,m_{i_0+k-1})$ with $m_{i_0+j}=m_{i_0}+j$ for $j=0,1,...,k-1$, which is followed
by a weakly decreasing step $(m_{i_0+k-1},m_{i_0+k})$, with $m_{i_0+k}=m_{i_0+k-1}$
or $m_{i_0+k}=m_{i_0+k-1}-1$.
Then by definition, we have the relations
\begin{eqnarray}
J_{i_0+j} &=& \big(1-t(y_{2i_0+2j-1}+y_{2i_0+2j})-t^2 J_{i_0+j+1} y_{2i_0+2j+1}y_{2i_0+2j}\big)^{-1}
\quad (j\in[0,k-2])\nonumber \\
\quad \qquad J_{i_0+k-1}&=& \big(1-ty_{2i_0+2k-3}-(J_{i_0+k}-\delta_{m_{i_0+k},m_{i_0+k-1}-1}) \hy_{2i_0+2k-2}\big)^{-1}
\label{initJ}
\end{eqnarray}
Like in Example \ref{exzero}, let us rearrange this piece of fraction iteratively from the bottom up,
by using the relation \eqref{reart}. This allows to rewrite:
\begin{eqnarray}
J_{i_0}&=&(1-t y_{2i_0-1}- t S_{2i_0}y_{2i_0} )^{-1} \nonumber \\
S_{2i_0+j}&=&\big(1-t S_{2i_0+j+1} y_{2i_0+1+j}\big)^{-1} \qquad (j\in[0,2k-4]) \nonumber \\
S_{2i_0+2k-3}&=&\big(1-S_{2i_0+2k-2} \hy_{2i_0+2k-2}\big)^{-1} \nonumber \\
S_{2i_0+2k-2}&=&J_{i_0+k}-\delta_{m_{i_0+k},m_{i_0+k-1}-1}\label{Spart}\end{eqnarray}
Indeed, we start by using \eqref{reart} at the step $j=k-2$ of \eqref{initJ}, with $a=ty_{2i_0+2k-5}$, $b=ty_{2i_0+2k-4}$, 
$c=y_{2i_0+2k-3}$ and $u=S_{2i_0+2k-2} \hy_{2i_0+2k-2}$, to rewrite:
$$
J_{i_0+k-2}=\Big(1- ty_{2i_0+2k-5} -t\big(1-t(1-S_{2i_0+2k-2}
 \hy_{2i_0+2k-2})^{-1}y_{2i_0+2k-3}\big)^{-1}y_{2i_0+2k-4}\Big)^{-1}
$$
Iterating this leads straightforwardly to \eqref{Spart}.
Repeating this transformation for every strictly ascending segment of $\bm$, we arrive at:

\begin{thm}\label{mixSJ}
For any Motzkin path $\bm$,
we have the following mixed Stieltjes-Jacobi type continued fraction expression for $G_{\bm}$. Let $i_1,..,i_s$
and $\ell_1,...,\ell_s$ be the positions and lengths of the strictly ascending segments of $\bm$,
of the form $(m_{i_a},m_{i_a}+1,...,m_{i_a+\ell_a-1})$, $a=1,2,...,s$. Then we have:
\begin{eqnarray*}
G_\bm(t)=& J_1   & \\
J_i =&\big(1- \hy_{2i-1} -J_{i+1} \hy_{2i}\big)^{-1} \hfill &(i\not\in \cup_a [i_a,i_a+\ell_a-1])\\
J_{i_a}=&\big(1- ty_{2i_a-1} -S_{2i_a} ty_{2i_a}\big)^{-1}  \hfill &(a \in [1,s])\\
S_i =&\big(1 -S_{i+1} ty_{i+1}\big)^{-1} \hfill   &(i\in [2i_a,2i_a+2\ell_a-4];a\in[1,s])\\
S_{2i_a+2\ell_a-3}=&\big(1 -S_{2i_a+2\ell_a-2} \hy_{2i_a+2\ell_a-2}\big)^{-1} \hfill  &(a\in[1,s])\\
S_{2i_a+2\ell_a-2}=&J_{i_a+\ell_a}-\delta_{m_{i_a+\ell_a},m_{i_a+\ell_a-1}-1} \hfill  &(a\in[1,s])
\end{eqnarray*}
\end{thm}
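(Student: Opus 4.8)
The plan is to take the Jacobi-type expression $G_\bm=J_1(\hy_1(\bm),\dots,\hy_{2r+1}(\bm))$ of Theorem \ref{yzer} as the starting point and to \emph{locally} resum only the stretches of the fraction that sit over the strictly ascending segments of $\bm$, leaving every other level in its native Jacobi form. First I would feed the three-case dictionary for the $\hy_i$ into the recursion $J_i=(1-\hy_{2i-1}-J_{i+1}\hy_{2i})^{-1}$ and read off that, inside an ascending block $[i_0,i_0+k-1]$ where $m_{i+1}=m_i+1$, one has $\hy_{2i-1}=t(y_{2i-1}+y_{2i})$ and $\hy_{2i}=t^2y_{2i+1}y_{2i}$, while at the closing level $i_0+k-1$ the next step is weakly decreasing, so the flat and descending cases of Theorem \ref{yzer} are packaged uniformly by the Kronecker term $\delta_{m_{i_0+k},m_{i_0+k-1}-1}$. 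This reproduces the starting form \eqref{initJ} and is purely a matter of bookkeeping the definitions.

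The single engine of the resummation is the rearrangement identity \eqref{reart}, which I would first certify holds verbatim in the non-commutative algebra: from $(1-c-u)^{-1}(1-u)=(1-c-u)^{-1}\big((1-c-u)+c\big)=1+(1-c-u)^{-1}c$ the first equality follows after right multiplication by $b$, and from $(1-c-u)^{-1}(1-u)=\big[(1-u)^{-1}(1-c-u)\big]^{-1}=\big[1-(1-u)^{-1}c\big]^{-1}$ the second follows, no commutativity being invoked. The body of the argument is then a \emph{downward} induction on the position $j$ inside a fixed block. The base case folds the closing level: with $u=S_{2i_0+2k-2}\hy_{2i_0+2k-2}$ and $S_{2i_0+2k-2}=J_{i_0+k}-\delta_{m_{i_0+k},m_{i_0+k-1}-1}$ one sets $S_{2i_0+2k-3}=(1-u)^{-1}$, exactly \eqref{Spart}. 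For the inductive step at level $i_0+j$ I would apply \eqref{reart} with $a=ty_{2i_0+2j-1}$, $b=ty_{2i_0+2j}$, $c=ty_{2i_0+2j+1}$ and $u=tS_{2i_0+2j+2}y_{2i_0+2j+2}$ the Stieltjes tail of the already-folded $J_{i_0+j+1}$; since $t$ is central, $(1-c-u)^{-1}cb$ reproduces the Jacobi term $t^2J_{i_0+j+1}y_{2i_0+2j+1}y_{2i_0+2j}$, and the third form of \eqref{reart} turns the denominator into $1-ty_{2i_0+2j-1}-tS_{2i_0+2j}y_{2i_0+2j}$ with the new variable $S_{2i_0+2j}=(1-tS_{2i_0+2j+1}y_{2i_0+2j+1})^{-1}$. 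Running this up to $j=0$ collapses the block into the single Stieltjes-capped level $J_{i_0}=(1-ty_{2i_0-1}-tS_{2i_0}y_{2i_0})^{-1}$, and performing it independently on each segment $[i_a,i_a+\ell_a-1]$ assembles the stated mixed expression.

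The hard part will not be a single application of \eqref{reart}, but the careful control of the central $t$-powers and of the non-commutative orderings at the seams where a folded block meets the untouched Jacobi fraction: one must verify that the top Stieltjes value $S_{2i_a+2\ell_a-2}$ really equals $J_{i_a+\ell_a}$ up to the $\delta$ shift, so that the block glues seamlessly onto level $i_a+\ell_a$, and that the flat versus descending closing cases are genuinely unified by the single $\delta$ term rather than leaving a residual summand. I would also double-check that no stray commutation factor $q$ is produced when the $y_i$ (Laurent monomials in the $R$'s with the controlled commutations \eqref{qcom}) are reordered during the folding; everything else is the routine iteration already illustrated in Example \ref{exstiel} and in the display preceding the theorem.
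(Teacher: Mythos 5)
Your proposal is correct and takes essentially the same route as the paper: starting from the Jacobi fraction of Theorem \ref{yzer}, packaging each strictly ascending block (with its flat or descending closing step unified by the Kronecker $\delta$, via $\hy_{2i-1}=ty_{2i-1}-\hy_{2i}$ in the descending case) into the form \eqref{initJ}, and folding it bottom-up by iterated application of the rearrangement identity \eqref{reart} to reach \eqref{Spart}. Your explicit check that \eqref{reart} holds without any commutativity assumption is a point the paper leaves implicit, but otherwise the argument coincides with the paper's own proof.
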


\subsection{Quantum commutation relations for the weights}

Using the commutations \eqref{qcom}, we obtain:
\begin{thm}\label{mzer}
Introducing $p=q^{r+1}$,
the weights $y_i\equiv y_i(\bm_0)$ of Theorem \ref{ronesol}
obey the following $p$-commutation relations:
\begin{eqnarray*}
y_{i}y_{i+1}&=&p \, y_{i+1}y_{i}\qquad (i\in [1,2r])\\
y_{2i}y_{2i+2}&=& p \, y_{2i+2}y_{2i} \qquad (i\in [1,r-1])\\
y_j y_k &=& y_k y_j \qquad {\rm otherwise}
\end{eqnarray*}
\end{thm}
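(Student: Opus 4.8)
The plan is to reduce the entire statement to a bilinear-algebra computation. Each weight $y_i\equiv y_i(\bm_0)$ in \eqref{yzero} is a Laurent monomial in the variables $R_{i,0},R_{i,1}$ ($i\in[1,r]$), which $q$-commute according to \eqref{qcom}. Writing $\omega\big((i,j),(k,m)\big)=(m-j)\Lambda_{i,k}$ for the exponent of $q$ produced when $R_{i,j}$ is moved past $R_{k,m}$, I note that $\omega$ is antisymmetric (using $\Lambda_{i,k}=\Lambda_{k,i}$) and extends $\Z$-linearly to the integer exponent vectors of monomials, since $R_{i,j}^{-1}$ picks up the opposite exponent. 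Thus if $y,y'$ have exponent vectors $a=(a_{(i,j)})$ and $b=(b_{(k,m)})$, then $y\,y'=q^{\Omega(a,b)}y'\,y$ with $\Omega(a,b)=\sum a_{(i,j)}b_{(k,m)}\,\omega\big((i,j),(k,m)\big)$. The whole theorem becomes the assertion that $\Omega(a,b)\in\{0,\pm(r+1)\}$ with the stated adjacency pattern.

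First I would factor the form through the time structure. For an exponent vector $a$ set $A_i=\sum_j a_{(i,j)}$ and $\tilde A_i=\sum_j j\,a_{(i,j)}$ (with $j\in\{0,1\}$). Because $\omega$ depends on $j,m$ only through $m-j$, a one-line rearrangement gives the factorized expression $\Omega(a,b)=\sum_{i,k}\Lambda_{i,k}\big(A_i\tilde B_k-\tilde A_i B_k\big)$. I then read the data off \eqref{yzero}: the odd weight $y_{2i-1}$ has $A\equiv 0$ with $\tilde A_i=1,\ \tilde A_{i-1}=-1$, while the even weight $y_{2i}$ has $A_{i+1}=1,\ A_i=-2,\ A_{i-1}=1$ and $\tilde A_{i+1}=1,\ \tilde A_i=-1,\ \tilde A_{i-1}=0$ (all other entries zero). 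Boundary indices need no special treatment: since $\Lambda_{0,k}=\Lambda_{r+1,k}=0$, the phantom variables $R_{0,\cdot}=R_{r+1,\cdot}=1$ contribute nothing to $\Omega$.

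The arithmetic crux is the discrete second-difference identity $\Lambda_{l,k+1}-2\Lambda_{l,k}+\Lambda_{l,k-1}=-(r+1)\,\delta_{k,l}$ (and its mirror in the first index), valid on $0\le k\le r+1$ with $\Lambda_{l,0}=\Lambda_{l,r+1}=0$. It holds because $k\mapsto\Lambda_{l,k}$ is piecewise linear with a single kink at $k=l$, and a direct evaluation at the kink yields $-(r+1)$; this is precisely where the parameter $p=q^{r+1}$ enters. Contracting the factorized form against the $A,\tilde A$ data and applying this identity, the inner index sums collapse to Kronecker deltas, producing $\Omega(a^{2l-1},a^{2k})=(r+1)(\delta_{k,l}-\delta_{k,l-1})$ for odd–even pairs, $\Omega(a^{2l},a^{2k})=(r+1)(\delta_{k,l+1}-\delta_{l,k+1})$ for even–even pairs, and $\Omega(a^{2l-1},a^{2k-1})=0$ for odd–odd pairs (immediate, as $A\equiv 0$ for both). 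Reading off adjacency and using antisymmetry for the even–odd consecutive pairs, these reproduce exactly the three families $y_iy_{i+1}=p\,y_{i+1}y_i$, $y_{2i}y_{2i+2}=p\,y_{2i+2}y_{2i}$, and commutation otherwise.

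The computation is mechanical once the factorization and the second-difference lemma are in hand, so there is no deep obstacle. The only place demanding care is the even weights, whose exponent vectors carry nonzero total charge $A$, so that both $A_i\tilde B_k$ and $\tilde A_i B_k$ contribute and one must check that the cross terms $(r+1)(\delta_{l,k}-\delta_{k,l})$ cancel by symmetry of $\delta$; verifying the extreme indices $i=1,r$, where a boundary variable appears, is the last detail, and it is automatic from the vanishing of $\Lambda$ on the boundary.
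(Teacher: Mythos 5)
Your proof is correct: the exponent-vector bilinear form $\Omega$, its factorization through the charges $A_i,\tilde A_i$, and the second-difference identity $\Lambda_{l,k+1}-2\Lambda_{l,k}+\Lambda_{l,k-1}=-(r+1)\,\delta_{k,l}$ (which is just the compatibility $\Lambda_0=(r+1)B_0^{-1}$, i.e. $C\Lambda=(r+1)I$, already noted in the paper) all check out, and the resulting formulas $\Omega(a^{2l-1},a^{2k})=(r+1)(\delta_{k,l}-\delta_{k,l-1})$, $\Omega(a^{2l},a^{2k})=(r+1)(\delta_{k,l+1}-\delta_{l,k+1})$, $\Omega(a^{2l-1},a^{2k-1})=0$ reproduce exactly the three families of relations, with the boundary cases handled correctly by the vanishing of $\Lambda$ at indices $0$ and $r+1$. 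The paper supplies no written proof of Theorem \ref{mzer} --- it is stated as following directly from the commutations \eqref{qcom} --- so your argument is essentially the same (direct-computation) approach, carried out systematically rather than left to the reader.
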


Using the recursion relations \eqref{recuy}, we deduce the commutation
relations of $y_i(\bm)$, for arbitrary Motzkin paths $\bm$:

\begin{thm}\label{ycom}
For a given Motzkin path $\bm$, the weights $y_i\equiv y_i(\bm)$ have the 
following commutation relations
\begin{eqnarray}
y_{i}y_{i+1}&=&p \, y_{i+1}y_{i}\qquad (i\in [1,2r])\label{commyone}\\
y_{2i}y_{2i+2}&=&p^{m_{i}-m_{i+1}+1}\,y_{2i+2} y_{2i} \qquad (i\in [1,r-1])\label{commytwo}\\
y_j y_k &=& y_k y_j \qquad {\rm otherwise}\nonumber 
\end{eqnarray}
\end{thm}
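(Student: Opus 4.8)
The plan is to argue by induction on the number of forward mutations needed to reach $\bm$ from the fundamental path $\bm_0$. First I would reduce to the case $m_i\ge 0$ for all $i$: the asserted relations depend on $\bm$ only through the differences $m_i-m_{i+1}$, and one checks from the closed monomial expressions \eqref{oddyq}--\eqref{evenyq} (or from the quasi-periodicity $B_{\bm+1}=-B_\bm$) that a global translation $\bm\mapsto\bm+\bone$ shifts every second index of the $R$'s uniformly and hence leaves all pairwise commutation exponents unchanged. It therefore suffices to treat paths with $m_i\ge 0$, which by Lemma \ref{lemposimut} are reached from $\bm_0$ by a sequence of forward mutations $\mu_i^+$, each applied in configuration (i) or (ii). The base case $\bm=\bm_0$ is exactly Theorem \ref{mzer}, since there $m_i-m_{i+1}+1=1$ and $p^1=p$.

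For the inductive step I would assume \eqref{commyone}--\eqref{commytwo} for $\bm$ and deduce them for $\bm'=\mu_i^+(\bm)$. The decisive simplification is that \eqref{recuy} changes only $y_{2i-1},y_{2i},y_{2i+1}$ (case (i)), or additionally $y_{2i+2}$ (case (ii)), so every relation not involving these indices is inherited verbatim. Writing $a=y_{2i-1}$, $b=y_{2i}$, $c=y_{2i+1}$, the inductive hypothesis specializes to the local quantum-torus relations $ab=p\,ba$, $bc=p\,cb$, $ac=ca$ (the pair $(y_{2i-1},y_{2i+1})$ being of ``otherwise'' type), together with the commutations of $a,b,c$ with the unchanged neighbors $y_{2i-2}$ and $y_{2i+2}$, whose exponents are pinned down by the local shape of $\bm$ in case (i)/(ii). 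Everything then reduces to manipulations inside this small algebra, the only non-monomial ingredient being the inverse $(a+b)^{-1}$ in $y_{2i}'$ and $y_{2i+1}'$.

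The technical heart is controlling that inverse. Here I would establish and repeatedly use intertwining identities of the type $(a+b)\,c=c\,(a+pb)$ and $(a+b)^{-1}c=c\,(a+pb)^{-1}$, obtained directly from $ac=ca$ and $bc=p\,cb$, which transport a generator past the sum $a+b$ at the cost of rescaling $b\mapsto pb$; the telescoping identity $y_{2i}'+y_{2i+1}'=c$ is a convenient bookkeeping check. With these, verifying $y_{2i-1}'y_{2i}'=p\,y_{2i}'y_{2i-1}'$, $y_{2i}'y_{2i+1}'=p\,y_{2i+1}'y_{2i}'$, and $y_{2i-1}'y_{2i+1}'=y_{2i+1}'y_{2i-1}'$ becomes a finite computation: for instance $y_{2i-1}'y_{2i}'=(a+b)\,cb\,(a+b)^{-1}$ and $y_{2i}'y_{2i-1}'=cb$, while $(a+b)cb=p\,cb(a+b)$ yields the factor $p$ at once.

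The exponent bookkeeping is then routine but must be done with care, and this is where I expect the main obstacle. Since $\mu_i^+$ sends $m_i\mapsto m_i+1$ and fixes the other $m_j$, the even-pair exponent of $(y_{2i-2},y_{2i})$ drops by one while that of $(y_{2i},y_{2i+2})$ rises by one; one must check these shifts match the values $m_{i-1}-m_i+1$ and $m_i-m_{i+1}+1$ demanded by \eqref{commytwo}, using the starting shapes $m_{i-1}=m_i=m_{i+1}-1$ (case (i)) and $m_{i-1}=m_i=m_{i+1}$ (case (ii)). Case (ii) is the delicate one: a fourth weight $y_{2i+2}$ also mutates, so the relation for the pair $(y_{2i},y_{2i+2})$ must be re-derived from scratch for the post-mutation weights and shown to produce exactly the required power of $p$, while simultaneously all cross-relations between the newly created weight and its neighbors are confirmed. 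As an independent check I would recompute each exponent $E_{jk}$ in $y_j(\bm)\,y_k(\bm)=q^{E_{jk}}y_k(\bm)\,y_j(\bm)$ directly from the monomial forms \eqref{oddyq}--\eqref{evenyq} and the $R$-commutations \eqref{qcom}, which should reproduce $E_{i,i+1}=r+1$, $E_{2i,2i+2}=(r+1)(m_i-m_{i+1}+1)$, and $E_{jk}=0$ otherwise.
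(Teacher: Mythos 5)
Your proposal is correct and follows essentially the same route as the paper's proof: induction on forward mutations via Lemma \ref{lemposimut}, base case Theorem \ref{mzer}, locality of the mutation \eqref{recuy}, a small-algebra computation taming the factor $(y_{2i-1}+y_{2i})^{-1}$, and the final exponent bookkeeping (drop by one on the pair $(2i-2,2i)$, rise by one on $(2i,2i+2)$), which is word for word the paper's concluding step. Your intertwining identity $(a+b)c=c(a+pb)$ is merely a repackaging of the paper's displayed relations \eqref{two}--\eqref{three}, and the sample computation you carry out agrees with the paper's. Two remarks. First, your preliminary reduction to $m_i\geq 0$ via translation invariance is a point the paper silently omits (its induction literally reaches only non-negative Motzkin paths, while the theorem is stated for all of them), so this is a genuine, if small, improvement. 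Second, the one place where your plan as written would stumble is exactly the case you flag as delicate: equation \eqref{recuy} as printed makes $y_{2i+2}'$ \emph{identical} to $y_{2i+1}'$ in case (ii), which is inconsistent with the target exponents ($p$ for the pair $(2i,2i+1)$ but $p^2$ for $(2i,2i+2)$); the paper's proof implicitly works with the corrected mutation $y_{2i+2}'=y_{2i+2}\,y_{2i-1}(y_{2i-1}+y_{2i})^{-1}$, and obtains $p^2$ from the observation that $y_{2i-1}+y_{2i}$ commutes with $y_{2i+2}\,y_{2i-1}$ when $m_{i+1}=m_i$. Your proposed independent recomputation of the exponents from \eqref{oddyq}--\eqref{evenyq} and \eqref{qcom} would detect and repair this, so the defect lies in the printed recursion rather than in your strategy.
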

\begin{proof}
By induction under mutation. The Theorem holds for $\bm=\bm_0$ (Theorem \ref{mzer}).
Assume it holds for some $\bm$, then consider $\bm'=\mu_i^+(\bm)$ in either cases (i) or (ii)
of Lemma \ref{lemposimut},
and denote by $y_j'\equiv y_j(\bm')$.
We deduce  the following commutations from the recursion hypothesis (eq.\eqref{commyone}):
\begin{eqnarray}
y_{2j-1}y_{2k-1}&=&y_{2k-1}y_{2j-1}\quad (j,k\in [1,r+1])\label{one}\\
(y_{2j-1}+y_{2j}) y_{2j+1}y_{2j}&=&p \, y_{2j+1}y_{2j}(y_{2j-1}+y_{2j}) \qquad (j\in [1,r])\label{two}\\
(y_{2j-1}+y_{2j}) y_{2j+1}y_{2j-1}&=&y_{2j+1}y_{2j-1} (y_{2j-1}+y_{2j}) \qquad (j\in [1,r])\label{three}
\end{eqnarray}
In both cases (i) and (ii), this implies that $y_{2j-1}'y_{2k-1}'=y_{2k-1}'y_{2j-1}'$ for all $j,k$
and that, as $m_{i-1}=m_i$ in both cases, $y_{2i-2}'y_{2i-1}'=p\, y_{2i-1}'y_{2i-2}'$, while
$y_{2i-1}' y_{2i}'=p\, y_{2i}'y_{2i-1}'$ as a consequence of \eqref{two} and
$y_{2i}'y_{2i+1}'=p\, y_{2i+1}'y_{2i}'$ by use of \eqref{two}-\eqref{three}.
Finally, we get $y_{2i-2}'y_{2i}'=y_{2i}'y_{2i-2}'$ as $y_{2i-2}'=y_{2i-2}$ $p$-commutes with 
both $y_{2i-1}+y_{2i}$ and  $y_{2i}$.
Using the expressions \eqref{recuy} for the cases (i) and (ii), we finally find
\begin{itemize}
\item Case (i): $y_{2i}'y_{2i+2}'=p\, y_{2i+2}'y_{2i}'$, as $y_{2i+2}'=y_{2i+2}$
and  $y_{2i}y_{2i+2}=y_{2i+2}y_{2i}$
(from $m_{i+1}=m_i+1$).
\item Case (ii): $y_{2i}'y_{2i+2}'=p^2\, y_{2i+2}'y_{2i}'$, as $(y_{2i-1}+y_{2i})$ commutes
with $y_{2i+2}y_{2i-1}$ (due to $y_{2i}y_{2i+2}=p\, y_{2i+2}y_{2i}$, from $m_{i+1}=m_i$).
\end{itemize}
Noting finally that $m_{i-1}'-m_i'+1=0$ in both cases (i)-(ii),
and that $m_i'-m_{i+1}'+1=1$ in case (i) and $=2$ in case (ii), the Theorem follows.
\end{proof}

\begin{figure}
\centering
\includegraphics[width=12.cm]{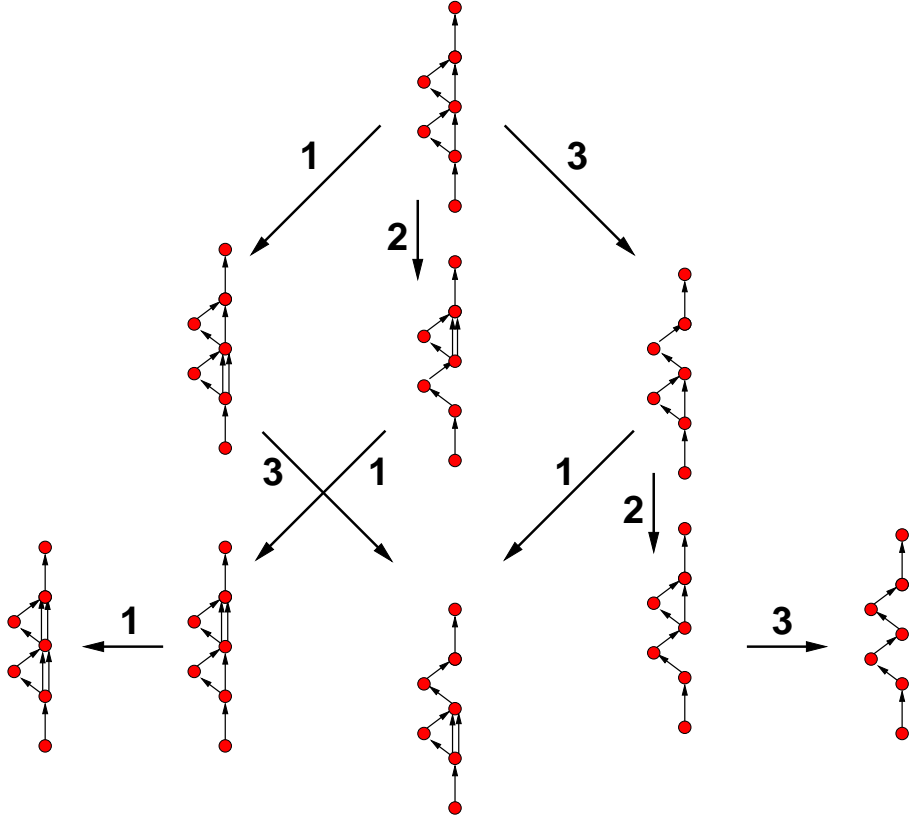}
\caption{\small The nine quivers $Q_\bm$ corresponding to the nine
Motzkin paths $\bm=(m_1,m_2,m_3)\in \cM_3$, and $Q_{\bm_0=(0,0,0)}$
on top of the diagram. In each quiver, the vertices are all labeled from bottom to top
$1,2,...,7$.
The arrows between quivers, labeled $i$, correspond to mutations $\mu_i^+$ in either
case (i) or (ii).}
\label{fig:athreecomm}
\end{figure}

To each Motzkin path $\bm$ we may associate a quiver $Q_\bm$ with $2r+1$ vertices
labelled $i=1,2,...,2r+1$, that summarizes
the commutations of the $y_i(\bm)$'s as follows: we draw $m$ arrows from
vertex $i$ to vertex $j$ whenever $y_i y_j=p^m\, y_j y_i$.  For illustration of Theorem \ref{ycom},
we have depicted in Fig.\ref{fig:athreecomm} the example $r=3$ of
the quivers $Q_\bm$ in the fundamental domain $\cM_3$
of Motzkin paths under global integer translations, 
and indicated by arrows and superscripts $i$
the mutations $\mu_i^+$ acting on them.

\subsection{Quantum multinomial expressions}

\subsubsection{$p$-combinatorics}

For $a_1,...,a_k\in \Z_+$, the quantum multinomial coefficient is defined as:
$$
\left[\begin{matrix} a_1+\cdots+ a_k \\ a_1,\ldots, a_k\end{matrix}\right]_p=
{\prod_{i=1}^{a_1+\cdots +a_k} (1-p^i) \over \prod_{j=1}^k\prod_{i=1}^{a_j} (1-p^i)}
$$
We assume by convention that when $k=2$, $\left[\begin{matrix}-1\\ 0\end{matrix}\right]_p=1$.
We have the following
$p$-multinomial identity for $k$ variables $x_1,...,x_k$ such that $x_ix_j=px_j x_i$ for all $1\leq i<j\leq k$:
$$(x_1+\cdots+x_k)^n=\sum_{m_1,...,,m_k \geq 0\atop m_1+\cdots m_k= n} 
\left[\begin{matrix} n \\ m_1,\ldots, m_k\end{matrix}\right]_p x_k^{m_k} x_{k-1}^{m_{k-1}}\cdots x_1^{m_1}$$
This reduces to the standard $p$-binomial identity for $k=2$.

We also define the following formal generating series:
$$\phi_\ell(z)=\prod_{i=0}^{\ell-1} (1-p^i z)^{-1} =\sum_{m=0}^\infty \left[\begin{matrix} \ell+m-1 \\ m\end{matrix}\right]_p z^m$$
and, for variables $x_1,...,x_k$ such that $x_ix_j=px_j x_i$ for all $1\leq i<j\leq k$, we have:
$$\phi_\ell(x_1+\cdots +x_k)
=\sum_{m_1,\ldots m_k\geq 0}   \left[\begin{matrix} \ell-1+m_1+\cdots+m_k \\ \ell-1,m_1,\ldots, m_k\end{matrix}\right]_p 
x_k^{m_k} x_{k-1}^{m_{k-1}}\cdots x_1^{m_1}$$

\subsubsection{Explicit expressions for $G_\bm(t)$}
We have the following explicit expression for $G_{\bm_0}(t)$, after ordering of the $y_i=y_i(\bm_0)$'s.
\begin{thm}
For the flat Motzkin path $\bm_0=(0,0,...,0)$, we have:
\begin{eqnarray*}
G_{\bm_0}(t)&=&\sum_{\ell_1,\ell_2,...,\ell_{2r+1}\geq 0\atop
\ell_0=1,\ell_{2r+2}=0} \prod_{i=0}^r \left[\begin{matrix}\ell_{2i}+\ell_{2i+1}+\ell_{2i+2}-1\\ 
\ell_{2i}-1,\ell_{2i+1},\ell_{2i+2}\end{matrix}\right]_p (ty_{2r+1})^{\ell_{2r+1}}(ty_{2r})^{\ell_{2r}}\cdots (ty_1)^{\ell_1}\\
F_{\bm_0}(t)&=&1+\sum_{\ell_1\geq 1,\ell_2,...,\ell_{2r+1}\geq 0\atop
\ell_0=1,\ell_{2r+2}=0} \prod_{i=0}^r \left[\begin{matrix}\ell_{2i}+\ell_{2i+1}+\ell_{2i+2}-1\\ 
\ell_{2i}-1,\ell_{2i+1},\ell_{2i+2}\end{matrix}\right]_p (ty_{2r+1})^{\ell_{2r+1}}(ty_{2r})^{\ell_{2r}}\cdots (ty_1)^{\ell_1}
\end{eqnarray*}
\end{thm}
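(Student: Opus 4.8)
The plan is to prove the formula for $G_{\bm_0}(t)$ by expanding the Jacobi-type continued fraction $G_{\bm_0}(t)=J_1(ty_1,\dots,ty_{2r+1})$ of Theorem \ref{ronesol} level by level, starting from the innermost convergent $J_{r+1}=(1-ty_{2r+1})^{-1}$ and working outward, and then to read off the formula for $F_{\bm_0}(t)$ from the relation $F_{\bm_0}=1+tG_{\bm_0}(t)y_1$ of the same theorem. Throughout I keep the $y_i\equiv y_i(\bm_0)$ normal-ordered as $(ty_{2r+1})^{\ell_{2r+1}}\cdots(ty_1)^{\ell_1}$, so that the stated series is literally the normal-ordered form of $J_1$.

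First I would record the three commutation facts, all immediate consequences of Theorem \ref{mzer}: (a) $y_{2s-1}$ commutes with every $y_j$ for $j\geq 2s+1$, hence with the whole tail $J_{s+1}$, which involves only $y_{2s+1},\dots,y_{2r+1}$; (b) consequently $A:=ty_{2s-1}$ and $B:=J_{s+1}\,ty_{2s}$ satisfy $AB=p\,BA$, since $y_{2s-1}y_{2s}=p\,y_{2s}y_{2s-1}$; and (c) $y_{2s-2}$ $p$-commutes with $y_{2s-1}$ and $y_{2s}$ only, and commutes with all $y_j$, $j\geq 2s+1$. The heart of the argument is then a downward induction on $s$ from $r+1$ to $1$ establishing, for every $a\geq 1$, the identity $(J_s\,ty_{2s-2})^a=\mathcal{J}_s^{(a)}\,(ty_{2s-2})^a$, where $\mathcal{J}_s^{(a)}$ denotes the partial multinomial sum obtained from the stated product by retaining the factors $i=s-1,\dots,r$ and fixing the top index $\ell_{2s-2}=a$.

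The induction step has two ingredients. Using (a)–(b) and the $p$-binomial form of $\phi_1$, I write $J_s=\phi_1(A+B)=\sum_{\ell_{2s-1},b}\left[\begin{smallmatrix}\ell_{2s-1}+b\\ \ell_{2s-1},b\end{smallmatrix}\right]_p(J_{s+1}ty_{2s})^b(ty_{2s-1})^{\ell_{2s-1}}$ and feed in the inductive expression for $(J_{s+1}ty_{2s})^b$; this gives $J_s=\mathcal{J}_s^{(1)}$ with the monomials automatically descending in index. To pass from exponent $1$ to exponent $a$, I commute all $a$ copies of $ty_{2s-2}$ to the right using (c). Writing $Z:=ty_{2s-1}+J_{s+1}ty_{2s}$ and letting $\sigma^k$ be the rescaling $y_{2s-1},y_{2s}\mapsto p^k y_{2s-1},p^k y_{2s}$, the key point is that $J_{s+1}$ contains neither $y_{2s-1}$ nor $y_{2s}$, so $\sigma^k$ fixes $J_{s+1}$ and acts on $Z$ simply as $Z\mapsto p^kZ$. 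Hence the product telescopes into a $\phi_a$:
\[
(J_s\,ty_{2s-2})^a=(1-Z)^{-1}(1-pZ)^{-1}\cdots(1-p^{a-1}Z)^{-1}\,(ty_{2s-2})^a=\phi_a(Z)\,(ty_{2s-2})^a .
\]
Expanding $\phi_a(Z)=\phi_a(A+B)$ by the multinomial identity for $\phi_\ell$ produces exactly the factor with first entry $a-1$, i.e. $\ell_{2s-2}=a$, while the inductive hypothesis for $(J_{s+1}ty_{2s})^b$ supplies the remaining factors $i=s,\dots,r$. The base case $s=r+1$ is just $(J_{r+1}ty_{2r})^a=\phi_a(ty_{2r+1})(ty_{2r})^a$, and taking $s=1,\ a=1$ gives $J_1=\mathcal{J}_1^{(1)}$, the claimed series for $G_{\bm_0}$.

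The step I expect to be the crux is this telescoping $J_s\,\sigma(J_s)\cdots\sigma^{a-1}(J_s)=\phi_a(Z)$: it is precisely where the fine structure of the relations of Theorem \ref{mzer} enters, through the invariance of $J_{s+1}$ under the rescaling and the fact that $y_{2s-2}$ $p$-commutes with exactly the two variables $y_{2s-1},y_{2s}$ produced at level $s$. The rest is bookkeeping, namely checking that the $p$-powers generated while normal-ordering recombine into the stated quantum multinomials (with the $\phi_\ell$-expansion used as a black box) and that the endpoint conventions $\ell_0=1$, $\ell_{2r+2}=0$ are respected. Finally, the expression for $F_{\bm_0}(t)$ follows from $F_{\bm_0}=1+tG_{\bm_0}(t)y_1$ of Theorem \ref{ronesol}: right-multiplying the series for $G_{\bm_0}$ by $ty_1$ raises the power of $y_1$, and collecting terms yields the stated sum over $\ell_1\geq 1$ together with the leading constant $1$.
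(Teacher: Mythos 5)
Your proof is correct and takes essentially the same route as the paper's: the core of both arguments is the conjugation/telescoping identity $(J_{s}\,ty_{2s-2})^{a}=\phi_a\big(ty_{2s-1}+J_{s+1}\,ty_{2s}\big)\,(ty_{2s-2})^{a}$ (valid because $y_{2s-2}$ $p$-commutes with exactly $y_{2s-1},y_{2s}$ and commutes with the deeper tail), followed by the $p$-multinomial expansion of $\phi_a$ in two $p$-commuting summands, with $F_{\bm_0}$ then read off from $F_{\bm_0}=1+tG_{\bm_0}(t)y_1$ exactly as in the paper. The only difference is organizational: the paper runs the induction outside-in, its hypothesis being the partial expansion of $J_1$ carrying the symbolic tail $(tJ_{m+1}y_{2m})^{\ell_{2m}}$, whereas you run the identical computation inside-out on the tails $(J_s\,ty_{2s-2})^{a}$, which is a reordering of the same steps rather than a different method.
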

\begin{proof}
By induction. We start with the expression \eqref{fraczero} for $G_{\bm_0}(t)=J_1$,
and write the inductive definition $J_1=(1-ty_1-tJ_2 y_2)^{-1}$. Next we note that $y_1 (J_2y_2)=p(J_2y_2)y_1$,
as $J_2$ only involves $y_i$, $i\geq 3$, which all commute with $y_1$, by Theorem \ref{mzer}. We deduce the
formal expansion 
$$J_1=\sum_{\ell_1,\ell_2\geq 0} \left[\begin{matrix}\ell_{1}+\ell_{2}\\ 
\ell_{2}\end{matrix}\right]_p (tJ_2y_2)^{\ell_2} (ty_1)^{\ell_1}$$
Assume we have
\begin{equation}
\label{hyprec}
J_1=\sum_{\ell_0,\ell_1,...,\ell_{2m}\geq 0\atop \ell_0=1} \prod_{i=0}^{m-1}\left[\begin{matrix}\ell_{2i}+\ell_{2i+1}+\ell_{2i+2}-1\\ 
\ell_{2i}-1,\ell_{2i+1},\ell_{2i+2}\end{matrix}\right]_p 
(tJ_{m+1}y_{2m})^{\ell_{2m}} (ty_{2m-1})^{\ell_{2m-1}}\cdots (ty_1)^{\ell_1}
\end{equation}
for some $m\geq 1$, then by Theorem \ref{mzer}, we know that $y_{2m}y_{2m+1}=py_{2m+1}y_{2m}$, 
$y_{2m}y_{2m+2}=py_{2m+2}y_{2m}$  while $y_{2m}$
commutes with $y_i$, $i\geq 2m+3$, and in particular with $J_{m+2}$. This implies:
\begin{eqnarray*}
y_{2m}J_{m+1}&=&y_{2m}(1-ty_{2m+1}-tJ_{m+2}y_{2m+2})^{-1}=(1-pty_{2m+1}-ptJ_{m+2}y_{2m+2})^{-1}y_{2m}\\
(tJ_{m+1}y_{2m})^\ell&=&\phi_\ell(ty_{2m+1}+tJ_{m+2}y_{2m+2})(ty_{2m})^{\ell}\\
&=&\sum_{\ell_{2m+1},\ell_{2m+2}\geq 0}
\left[\begin{matrix}\ell+\ell_{2m+1}+\ell_{2m+2}-1\\ 
\ell-1,\ell_{2m+1},\ell_{2m+2}\end{matrix}\right]_p (tJ_{m+2}y_{2m+2})^{\ell_{2m+2}}(ty_{2m+1})^{\ell_{2m+1}}(ty_{2m})^{\ell}
\end{eqnarray*}
where in the last line we have used the fact that $y_{2m+1}y_{2m+2}=py_{2m+2}y_{2m+1}$,
while $y_{2m+1}$ commutes with $J_{m+2}$. Substituting this into \eqref{hyprec}, with $\ell=\ell_{2m}$,
we obtain the same summation with $m\to m+1$. We conclude that \eqref{hyprec} holds for all $m$,
and in particular for $m=r+1$, in which case $J_{r+2}=0$ imposes that the last summation reduce to $\ell_{2r+2}=0$,
and the first part of the Theorem follows. The second follows trivially from the relation between $F_{\bm_0}$
and $G_{\bm_0}$.
\end{proof}

\begin{thm} \label{monethm}
For the ascending Motzkin path $\bm_1=(0,1,2,...,r-1)$, we have:
\begin{eqnarray*}
G_{\bm_1}(t)&=&\sum_{\ell_1,\ell_2,...,\ell_{2r+1}\geq 0} \prod_{i=1}^{2r} \left[\begin{matrix}\ell_{i}+\ell_{i+1}-1+\delta_{i,1}\\ 
\ell_{i+1}\end{matrix}\right]_p (ty_{2r+1})^{\ell_{2r+1}}(ty_{2r})^{\ell_{2r}}\cdots (ty_1)^{\ell_1}\\
F_{\bm_1}(t)&=&\sum_{\ell_1,\ell_2,...,\ell_{2r+1}\geq 0} 
\prod_{i=1}^{2r} \left[\begin{matrix}\ell_{i}+\ell_{i+1}-1\\ 
\ell_{i+1}\end{matrix}\right]_p (ty_{2r+1})^{\ell_{2r+1}}(ty_{2r})^{\ell_{2r}}\cdots (ty_1)^{\ell_1}
\end{eqnarray*}
\end{thm}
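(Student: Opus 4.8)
The plan is to establish the formula for $F_{\bm_1}(t)$ directly from its Stieltjes-type continued fraction, and then to deduce the formula for $G_{\bm_1}(t)$ from the relation $F_{\bm_1}=1+tG_{\bm_1}y_1$. First I would record the commutations of the weights $y_i\equiv y_i(\bm_1)$: since $\bm_1=(0,1,\dots,r-1)$ satisfies $m_i-m_{i+1}+1=0$, Theorem \ref{ycom} specializes to the statement that $y_i$ and $y_j$ $p$-commute exactly when $|i-j|=1$ (with $y_iy_{i+1}=p\,y_{i+1}y_i$) and commute otherwise. Writing $z_i:=ty_i$ (with $t$ central), the same relations hold for the $z_i$, and by Example \ref{exstiel} we have $F_{\bm_1}(t)=S_1(z_1,\dots,z_{2r+1})$ with $S_k=(1-S_{k+1}z_k)^{-1}$ and $S_{2r+2}=1$.

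The core is an induction on the depth $m$ of the Stieltjes fraction, mirroring the proof of the preceding theorem for $G_{\bm_0}$ but with binomials in place of trinomials. I would prove that for every $m\in[1,2r+1]$,
\begin{equation*}
S_1=\sum_{\ell_1,\dots,\ell_m\geq 0}\Big(\prod_{i=1}^{m-1}\left[\begin{matrix}\ell_i+\ell_{i+1}-1\\ \ell_{i+1}\end{matrix}\right]_p\Big)(S_{m+1}z_m)^{\ell_m}z_{m-1}^{\ell_{m-1}}\cdots z_1^{\ell_1},
\end{equation*}
the base case $m=1$ being simply $S_1=\sum_{\ell_1}(S_2z_1)^{\ell_1}$. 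For the step I would expand $(S_{m+1}z_m)^{\ell_m}$: since $z_m$ commutes with $S_{m+2}$ (which involves only $z_{m+2},\dots,z_{2r+1}$) and satisfies $z_mz_{m+1}=p\,z_{m+1}z_m$, the word $w:=S_{m+2}z_{m+1}$ obeys $z_mw=p\,wz_m$, whence $z_m(1-w)^{-1}=(1-pw)^{-1}z_m$ and, telescoping, $(S_{m+1}z_m)^{\ell_m}=\phi_{\ell_m}(w)\,z_m^{\ell_m}$. Applying the single-variable identity $\phi_{\ell}(w)=\sum_{n\geq0}\left[\begin{matrix}\ell+n-1\\ n\end{matrix}\right]_p w^n$ with $n=\ell_{m+1}$ produces exactly the $(m+1)$-th binomial factor together with the new black box $(S_{m+2}z_{m+1})^{\ell_{m+1}}$ in its correct leftmost position, advancing the induction. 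At $m=2r+1$, the relation $S_{2r+2}=1$ collapses the black box to $z_{2r+1}^{\ell_{2r+1}}$ and yields the claimed series for $F_{\bm_1}$.

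To obtain $G_{\bm_1}$ I would use $tG_{\bm_1}y_1=F_{\bm_1}-1$. The key observation is that every nonzero term of $F_{\bm_1}$ except the all-zero one carries $\ell_1\geq1$: if $\ell_1=0$ the $i=1$ factor is $\left[\begin{matrix}\ell_2-1\\ \ell_2\end{matrix}\right]_p$, which vanishes for $\ell_2\geq1$ and equals $1$ only for $\ell_2=0$ (by the stated convention $\left[\begin{matrix}-1\\ 0\end{matrix}\right]_p=1$), so $\ell_1=0$ forces $\ell_2=0$, and iterating along the chain forces every $\ell_i=0$. Hence $F_{\bm_1}-1$ consists precisely of the terms with $\ell_1\geq1$, each ending in $(ty_1)^{\ell_1}$ on the right. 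Peeling off one factor $ty_1$ gives $G_{\bm_1}$, and the substitution $\ell_1\to\ell_1+1$ shifts the top of the $i=1$ binomial by $+1$, turning $\left[\begin{matrix}\ell_1+\ell_2-1\\ \ell_2\end{matrix}\right]_p$ into $\left[\begin{matrix}\ell_1+\ell_2-1+\delta_{i,1}\\ \ell_2\end{matrix}\right]_p$ while leaving all other factors untouched, which is exactly the asserted expression for $G_{\bm_1}$.

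The bulk of the argument is routine once the inductive hypothesis is chosen correctly; the two places demanding care are the verification that $z_m$ $p$-commutes \emph{cleanly} through the entire Stieltjes tail $S_{m+1}$ (so that the single-variable $\phi_{\ell}$ suffices and the trinomials of the $G_{\bm_0}$ proof degenerate to a binomial chain), and the bookkeeping in the passage $F_{\bm_1}\to G_{\bm_1}$. I expect the latter to be the main obstacle, since it is the only point where the two formulas genuinely differ: one must correctly argue the vanishing of the $\ell_1=0$ contributions and identify the reindexing shift with the Kronecker delta $\delta_{i,1}$.
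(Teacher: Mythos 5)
Your proposal is correct and follows essentially the same route as the paper: expand the Stieltjes fraction $F_{\bm_1}=S_1$ of Example \ref{exstiel} inductively, using the fact that for $\bm_1$ only adjacent weights $p$-commute so that $(S_{m+1}ty_m)^{\ell}=\phi_{\ell}(S_{m+2}ty_{m+1})(ty_m)^{\ell}$ generates the chain of $p$-binomials, then collapse at $S_{2r+2}=1$ and recover $G_{\bm_1}$ from $F_{\bm_1}=1+tG_{\bm_1}y_1$. The only difference is one of detail: the paper disposes of the $F\to G$ step in a single sentence, whereas you correctly spell out why the $\ell_1=0$ terms reduce to the constant $1$ and how the shift $\ell_1\to\ell_1+1$ produces the $\delta_{i,1}$ in the exponent.
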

\begin{proof}
We start from the expression $F_{\bm_1}(t)=S_1$ of Example \ref{exstiel}, and write
$$S_1=(1-tS_2y_1)^{-1}=\sum_{\ell_1\geq 0} \phi_{\ell_1}(tS_3y_2)(ty_1)^{\ell_1}=\sum_{\ell_1,\ell_2\geq 0}
 \left[\begin{matrix}\ell_{1}+\ell_{2}-1\\ 
\ell_{2}\end{matrix}\right]_p (tS_3^{\ell_2}y_2)(ty_1)^{\ell_1}$$
Assume we have
$$S_1=\sum_{\ell_1,...,\ell_{m-1}\geq 0} \prod_{i=1}^{m-2} \left[\begin{matrix}\ell_{i}+\ell_{i+1}-1\\ 
\ell_{i+1}\end{matrix}\right]_p (tS_my_{m-1})^{\ell_{m-1}}(ty_{m-2})^{\ell_{m-2}} \cdots (ty_1)^{\ell_1}$$
for some $m$, then writing $(tS_my_{m-1})^{\ell}=\phi_\ell(tS_{m+1}y_m)y_{m-1}^\ell$ immediately
implies the same relation for $m\to m+1$. We conclude that it holds for all $m$, in particular for $m=r+2$,
where it boils down to the second part of the Theorem. The first part follows from the relation between 
$F_{\bm_1}$ and $G_{\bm_1}$.
\end{proof}

\subsubsection{The main Theorem}

More generally, we have

\begin{thm}\label{toe}
For a generic Motzkin path $\bm$, we have the following:
\begin{eqnarray*}
F_{\bm}(t)&=&\sum_{\ell_1,\ell_2,...,\ell_{2r+1}\in \Z_+} A_\bm(\ell_1,\ell_2,...,\ell_{2r+1}) (ty_{2r+1})^{\ell_{2r+1}}\cdots
(ty_1)^{\ell_1}\\
G_{\bm}(t)&=&\sum_{\ell_1,\ell_2,...,\ell_{2r+1}\in\Z_+} A_\bm(\ell_1+1,\ell_2,...,\ell_{2r+1}) (ty_{2r+1})^{\ell_{2r+1}}\cdots
(ty_1)^{\ell_1}
\end{eqnarray*}
where $A_\bm$ is defined as the product 
\begin{eqnarray*}
A_\bm(\ell_1,...,\ell_{2k+1})&=&\left[\begin{matrix}\ell_{1}+\ell_{2}-1\\ 
\ell_{2}\end{matrix}\right]_p \qquad \raisebox{-.5cm}{\hbox{\epsfxsize=.5cm \epsfbox{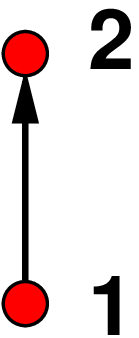}}} \\
&&\!\!\!\!\!\!\!\!\!\!\!\!\!\!\!\!\!\!\!\!\!\!\!\!\!\!\!\!\!\!\!\!\!\!\!\!\!\!\!\!\!\!\!\!\!\!\!\!\!\!\!\!\!\!\!\!
\times \prod_{i=1}^{r-2} \left\{ \begin{matrix}
\left[\begin{matrix}\ell_{2i}+\ell_{2i+1}+\ell_{2i+2}-1\\ 
\ell_{2i}-1,\ell_{2i+1},\ell_{2i+2}\end{matrix}\right]_p & {\rm if}\, m_{i+1}=m_i 
&\raisebox{-1.cm}{\hbox{\epsfxsize=2.5cm \epsfbox{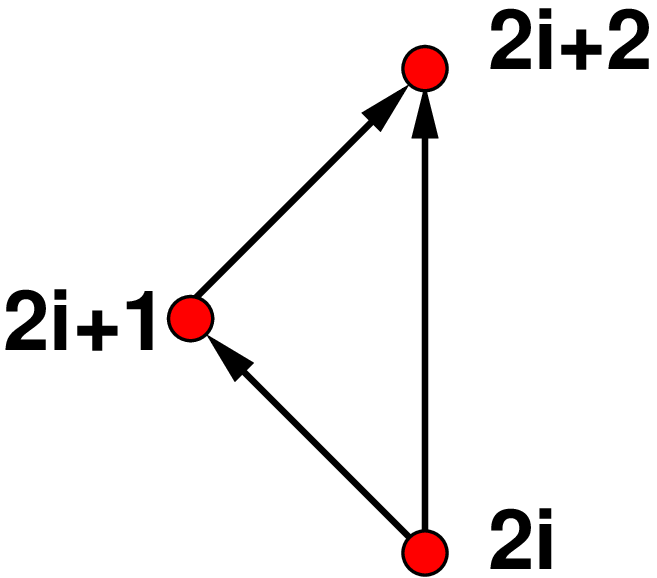}}}\\
\left[\begin{matrix}\ell_{2i}+\ell_{2i+1}+\ell_{2i+2}-1\\ 
\ell_{2i}-1\end{matrix}\right]_p
\left[\begin{matrix}\ell_{2i}+\ell_{2i+1}+\ell_{2i+2}\\ 
\ell_{2i+2}\end{matrix}\right]_p & {\rm if}\, m_{i+1}=m_i-1
&\raisebox{-1.cm}{\hbox{\epsfxsize=2.5cm \epsfbox{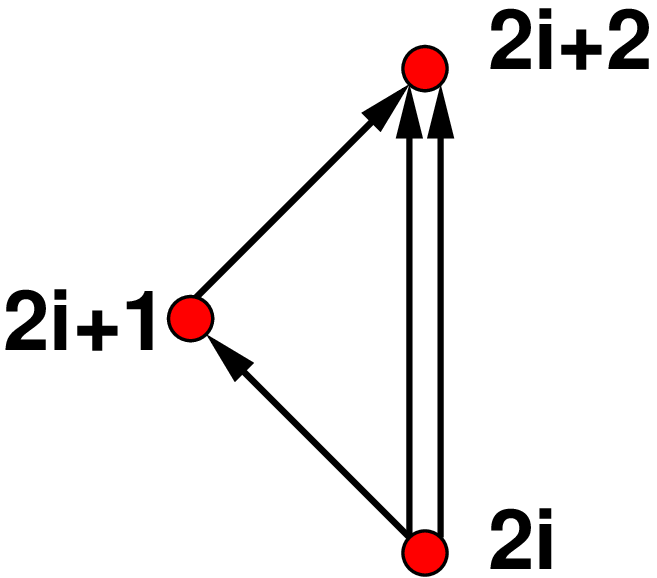}}}\\
\left[\begin{matrix}\ell_{2i}+\ell_{2i+1}-1\\ 
\ell_{2i+1}\end{matrix}\right]_p
\left[\begin{matrix}\ell_{2i+1}+\ell_{2i+2}-1\\ 
\ell_{2i+2}\end{matrix}\right]_p & {\rm if}\, m_{i+1}=m_i+1
&\raisebox{-1.cm}{\hbox{\epsfxsize=2.5cm \epsfbox{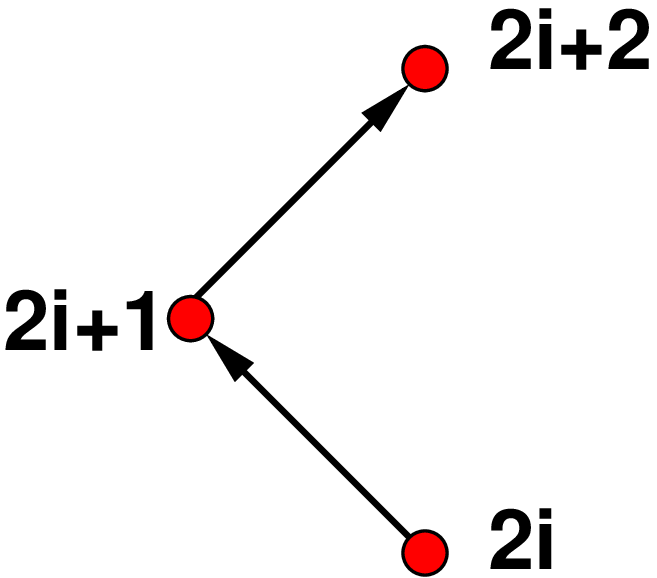}}}
\end{matrix} \right\}\\
&&\!\!\!\!\times \left[\begin{matrix}\ell_{2r}+\ell_{2r+1}-1\\ 
\ell_{2r+1}\end{matrix}\right]_p 
\qquad \raisebox{-.5cm}{\hbox{\epsfxsize=1.cm \epsfbox{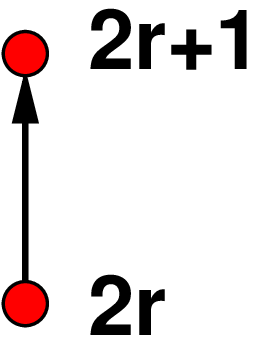}}} 
\end{eqnarray*}
where we have represented the local structure of the corresponding quiver $Q_\bm$
that encodes the $p$-commutations of the $y$'s.
\end{thm}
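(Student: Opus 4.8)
The plan is to prove Theorem \ref{toe} by induction on the levels of the mixed Stieltjes--Jacobi continued fraction for $G_\bm(t)$ supplied by Theorem \ref{mixSJ}, expanding it level by level (as in the proofs of the two preceding theorems), starting from the outermost convergent $J_1$ and peeling off one weight at a time. The two already-established cases are the extreme regimes of this induction: the flat path $\bm_0$ is the pure-Jacobi case, in which every level is a three-term step $J_i=(1-ty_{2i-1}-tJ_{i+1}y_{2i})^{-1}$ producing the trinomial coefficient, while the ascending path $\bm_1$ (Theorem \ref{monethm}) is the pure-Stieltjes case, in which every level is a two-term step $S_i=(1-tS_{i+1}y_i)^{-1}$ producing a binomial coefficient. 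The content of the generic statement is that these two elementary blocks may be interleaved according to the local shape of $\bm$, and that the local $p$-commutation $y_{2i}y_{2i+2}=p^{m_i-m_{i+1}+1}y_{2i+2}y_{2i}$ of Theorem \ref{ycom} is precisely what licenses the relevant $p$-multinomial identity at each junction.

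Concretely, I would carry a running inductive hypothesis asserting that expansion through position $i$ yields the partial product of the local factors of $A_\bm$ already accounted for, times a tail of the form $(t\,(\text{J- or S-continuation})\,y)^{\ell}$ with the $y$-monomials already placed in the global order $\cdots(ty_1)^{\ell_1}$. The inductive step has two moves: first commute the newly exposed weight $y_{2i}$ (or the relevant Stieltjes weight) past the tail using Theorem \ref{ycom}, which produces exactly the power of $p$ dictated by the local shape; second apply the corresponding $\phi_\ell$ expansion to emit the next multinomial factor. For a flat step ($m_{i+1}=m_i$) the relation $y_{2i}y_{2i+2}=p\,y_{2i+2}y_{2i}$ reproduces the flat-path computation verbatim and yields the trinomial factor. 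Inside a strictly ascending segment ($m_{i+1}=m_i+1$, where $y_{2i}$ and $y_{2i+2}$ now commute) the fraction is already in Stieltjes form by Theorem \ref{mixSJ}, and the ascending computation of Theorem \ref{monethm} applies, yielding the two binomial factors $\left[\begin{smallmatrix}\ell_{2i}+\ell_{2i+1}-1\\ \ell_{2i+1}\end{smallmatrix}\right]_p\left[\begin{smallmatrix}\ell_{2i+1}+\ell_{2i+2}-1\\ \ell_{2i+2}\end{smallmatrix}\right]_p$. The bottom factor $\left[\begin{smallmatrix}\ell_1+\ell_2-1\\ \ell_2\end{smallmatrix}\right]_p$ and the top factor $\left[\begin{smallmatrix}\ell_{2r}+\ell_{2r+1}-1\\ \ell_{2r+1}\end{smallmatrix}\right]_p$ arise, respectively, from the initial step at position $1$ and from the terminal step where $J_{r+2}=0$ (equivalently $S_{2r+2}=1$) truncates the tail.

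The main obstacle is the descending case ($m_{i+1}=m_i-1$), which has no analogue among the two solved examples. Here Theorem \ref{yzer} gives $\hy_{2i-1}=ty_{2i-1}-y_{2i+1}^{-1}y_{2i}$ and $\hy_{2i}=y_{2i+1}^{-1}y_{2i}$, so the step involves a subtraction and an inverse weight that obscure the positive monomial structure. My plan is first to recombine $J_i=(1-\hy_{2i-1}-J_{i+1}\hy_{2i})^{-1}=\big(1-ty_{2i-1}-(J_{i+1}-1)y_{2i+1}^{-1}y_{2i}\big)^{-1}$, turning the offending $-y_{2i+1}^{-1}y_{2i}$ into the combination $(J_{i+1}-1)$, which is itself a convergent power series in the deeper weights; then to expand this as two nested geometric/$\phi_\ell$ steps, so that the inverse $y_{2i+1}^{-1}$ cancels against a factor produced by the expansion of $J_{i+1}$. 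The delicate bookkeeping is to check that the surviving coefficient is exactly the stated product $\left[\begin{smallmatrix}\ell_{2i}+\ell_{2i+1}+\ell_{2i+2}-1\\ \ell_{2i}-1\end{smallmatrix}\right]_p\left[\begin{smallmatrix}\ell_{2i}+\ell_{2i+1}+\ell_{2i+2}\\ \ell_{2i+2}\end{smallmatrix}\right]_p$, with the asymmetric shifts (a $-1$ in the first upper argument, none in the second) reflecting the $p^2$-commutation $y_{2i}y_{2i+2}=p^2y_{2i+2}y_{2i}$ that holds in this case. Finally, the $F$-versions follow from the $G$-versions through the elementary relation $F_\bm=1+tG_\bm y_1$, which implements the shift $\ell_1\mapsto\ell_1+1$ in the argument of $A_\bm$.
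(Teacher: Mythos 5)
Your proposal follows essentially the same route as the paper's own proof: a level-by-level expansion of the mixed Stieltjes--Jacobi fraction of Theorem \ref{mixSJ}, with flat steps emitting the trinomial via a $\phi_\ell$ expansion, strictly ascending segments treated by the Stieltjes recursion from the proof of Theorem \ref{monethm}, and the descending step handled by the rewriting $J_i=\big(1-ty_{2i-1}-(J_{i+1}-1)y_{2i+1}^{-1}y_{2i}\big)^{-1}$ --- which is precisely the paper's device $\epsilon_k=\delta_{m_{k-1},m_k+1}$, whose expansion $\phi_m(X)X^m$ (coming from $J-1=(1-X)^{-1}X$) produces exactly the asymmetric pair of binomials after the inverse weight $(ty_{2i+1})^{-m}$ is absorbed by an index shift. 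The bookkeeping you flag as delicate goes through exactly as you anticipate, so the plan is sound and matches the paper's argument.
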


\begin{proof}
As before, we proceed by descending induction. Assume that for some $k\geq 1$, such that 
$m_k=m_{k-1}$ (Case (1)) or $m_k=m_{k-1}-1$ (Case (2)), we have
an expression of the form:
\begin{eqnarray}
F_{\bm}(t)&=&\sum_{\ell_1,...,\ell_{2k}\in \Z_+} A_{m_1,...,m_k}(\ell_1,...,\ell_{2k})\nonumber \\
&&\qquad \times \big((J_{k+1}-\epsilon_k)\hy_{2k}\big)^{\ell_{2k} }
(ty_{2k-1})^{\ell_{2k-1}} \cdots (ty_1)^{\ell_1}\\
\epsilon_k&=&\delta_{m_{k-1},m_k+1}
\end{eqnarray}
We now have three possibilities for the Motzkin path:

Case (a): $m_{k+1}=m_k$.

Case (b): $m_{k+1}=m_k-1$.

Case (c): $m_{k+j}=m_k+j$ for $j=0,1,2,...,\ell-1$ and $m_\ell\leq m_{\ell-1}$.

Writing $J_{k+1}=(1-\hy_{2k+1}-J_{k+2}\hy_{2k+2})^{-1}$, we note that we always have
$\hy_{2k} \hy_{2k+1}=p\hy_{2k+1}\hy_{2k}$, $\hy_{2k} \hy_{2k+2}=p\hy_{2k+2}\hy_{2k}$, while
$\hy_{2k}$ commutes with $J_{k+2}$, as the latter only depends on $y_i$ with $i\geq 2k+3$.
This allows to write:
\begin{eqnarray}
{\rm Case (1)}\!\!:&\!\!\!\! (J_{k+1}\hy_{2k})^{m}&\!\!\!=\phi_m(\hy_{2k+1}+J_{k+2}\hy_{2k+2}) (t y_{2k})^m \nonumber \\
{\rm Case (2)}\!\!:&\!\!\!\!((J_{k+1}-1)\hy_{2k})^{m}&\!\!\!=p^{m(m-1)\over 2} \phi_m(\hy_{2k+1}+J_{k+2}\hy_{2k+2}) 
(\hy_{2k+1}+J_{k+2}\hy_{2k+2})^m (\hy_{2k})^m\nonumber \\
&  &\!\!\!= \phi_m(\hy_{2k+1}+J_{k+2}\hy_{2k+2}) 
(\hy_{2k+1}+J_{k+2}\hy_{2k+2})^m (ty_{2k+1})^{-m} (ty_{2k})^m\label{cases}
\end{eqnarray}
where we have used  $\hy_{2k}=t y_{2k}$ in the case (1), and 
$\hy_{2k}=y_{2k+1}^{-1}y_{2k}$ in the case (2), hence
$(\hy_{2k})^m =p^{-{m(m-1)\over 2}}(ty_{2k+1})^{-m}(ty_{2k})^m$.

We first treat the Cases (a) and (b).
In both cases, we have $y_{2k+1}\hy_{2k+2}=p\, \hy_{2k+2}y_{2k+1}$, while $y_{2k+1}$ 
commutes with $J_{k+2}$. This suggests to rewrite
$$ \hy_{2k+1}+J_{k+2}\hy_{2k+2} =ty_{2k+1}+(J_{k+2}-\epsilon_{k+1})\hy_{2k+2}, 
\qquad \epsilon_{k+1}=\delta_{m_{k+1},m_k-1}$$
in which the two summands $p$-commute.
In the Cases (1a) and (1b) we get:
$$\phi_m(ty_{2k+1}+(J_{k+2}-\epsilon_k)\hy_{2k+2})=
\sum_{j,n\geq 0} \left[\begin{matrix}m-1+j+n\\ 
m-1,j,n\end{matrix}\right]_p ((J_{k+2}-\epsilon_k)\hy_{2k+2})^n(ty_{2k+1})^j
$$
while in the Cases (2a) and (2b):
\begin{eqnarray*}
&&\!\!\!\!\!\!\!\!\!\!\!\!\!\!\!\!\!\!\!\!\!\!\!\!\!\!\!\!\!\!\!\!\!\!\!\!\!\!\!\!\!\!
\phi_m(ty_{2k+1}+(J_{k+2}-\epsilon_k)\hy_{2k+2}) (ty_{2k+1}+(J_{k+2}-\epsilon_k)\hy_{2k+2})^m\\
&&=\sum_{j\geq 0} \left[\begin{matrix}m-1+j\\ 
j\end{matrix}\right]_p (ty_{2k+1}+(J_{k+2}-\epsilon_k)\hy_{2k+2})^{m+j}\\
&&= \sum_{j,n\geq 0\atop j+n\geq m}
\left[\begin{matrix}j+n-1\\ 
m-1\end{matrix}\right]_p
\left[\begin{matrix}j+n\\ 
j\end{matrix}\right]_p
((J_{k+2}-\epsilon_k)\hy_{2k+2})^n(ty_{2k+1})^j
\end{eqnarray*}

We now deal with the Case (c). By Theorem \ref{mixSJ}, the strictly ascending segment of length $\ell$
starting at $m_k$ corresponds to the mixed Stieltjes-Jacobi expression:
\begin{eqnarray*}
J_{k+1}&=&(1-t y_{2k+1}- S_{2k+2}ty_{2k+2})^{-1}\\
S_i&=& (1-S_{i+1}ty_{i+1})^{-1} \qquad \qquad (i\in [2k+2,2k+2\ell-2])\\
S_{2k+2\ell-1}&=& (1-(J_{k+\ell+1}-\epsilon_{k+\ell})\hy_{2k+2\ell})^{-1}
\end{eqnarray*}
Eq.\eqref{cases} may be rephrased in the present case by substituting
$\hy_{2k+1}+J_{k+2}\hy_{2k}$ with $t y_{2k+1}+S_{2k+2}ty_{2k+2}$.
As before the two cases (1) and (2) lead respectively to:
\begin{eqnarray*}
&&\phi_m(ty_{2k+1}+(S_{2k+2})ty_{2k+2})=
\sum_{j,n\geq 0} \left[\begin{matrix}m-1+j+n\\ 
m-1,j,n\end{matrix}\right]_p (S_{2k+2}ty_{2k+2})^n(ty_{2k+1})^j\\
&&
\phi_m(ty_{2k+1}+(S_{2k+2})ty_{2k+2})(ty_{2k+1}+(S_{2k+2})ty_{2k+2})^m\\
&&\qquad \qquad \qquad=
\sum_{j,n\geq 0\atop j+n\geq m}
\left[\begin{matrix}j+n-1\\ 
m-1\end{matrix}\right]_p
\left[\begin{matrix}j+n\\ 
j\end{matrix}\right]_p
(S_{2k+2}ty_{2k+2})^n(ty_{2k+1})^j
\end{eqnarray*}

Due to the commutation relations between the $y$'s involved, we may apply directly the
recursion in the proof of
Theorem \ref{monethm}, to get:
\begin{eqnarray*}
(S_{2k+2}ty_{2k+2})^n&=&\sum_{\ell_{2k+3},\ell_{2k+4},...,\ell_{2k+2\ell}\geq 0\atop \ell_{2k+2}=n} \prod_{i=2k+2}^{2k+2\ell-1} 
\left[\begin{matrix}\ell_{i}+\ell_{i+1}-1\\ 
\ell_{i+1}\end{matrix}\right]_p \\
&&\times\ ((J_{k+\ell+1}-\epsilon_{k+\ell})\hy_{2k+2\ell})^{\ell_{2k+2\ell}}
(ty_{2k+2\ell-1})^{\ell_{2k+2\ell-1}} \cdots (ty_{2k+3})^{\ell_{2k+3}}(ty_{2k+2})^{n}
\end{eqnarray*}

Assembling all the cases, we find the following recursion relation for $A$ in the cases ($i$a) and ($i$b),
with $i=1,2$:
\begin{eqnarray}
&&A_{m_1,...,m_{k+1}}(\ell_1,...,\ell_{2k+2})=
A_{m_1,...,m_{k}}(\ell_1,...,\ell_{2k}) U_i(\ell_{2k},\ell_{2k+1},\ell_{2k+2}) \nonumber \\
&&\qquad \left\{\begin{matrix}
U_1= \left[ \begin{matrix} \ell_{2k}+\ell_{2k+1}+\ell_{2k+2}-1\\\ell_{2k}-1, \ell_{2k+1},\ell_{2k+2}\end{matrix}\right]_p\\
U_2=  \left[ \begin{matrix} \ell_{2k}+\ell_{2k+1}+\ell_{2k+2}-1\\\ell_{2k}-1\end{matrix}\right]_p
   \left[ \begin{matrix}\ell_{2k}+ \ell_{2k+1}+\ell_{2k+2}\\\ell_{2k+2}\end{matrix}\right]_p
\end{matrix} \right.
\end{eqnarray}
and in the cases ($i$c), $i=1,2$:
\begin{eqnarray}
&&\quad A_{m_1,...,m_{k+\ell-1}}(\ell_1,...,\ell_{2k+2\ell})\\
&&\qquad\qquad\qquad=
 A_{m_1,...,m_{k}}(\ell_1,...,\ell_{2k}) 
U_i(\ell_{2k},\ell_{2k+1},\ell_{2k+2}) V(\ell_{2k+2},...,\ell_{2k+2\ell})  \nonumber \\
&&
\quad V(\ell_{2k+2},...,\ell_{2k+2\ell})= \prod_{i=2k+2}^{2k+2\ell-1} 
\left[\begin{matrix}\ell_{i}+\ell_{i+1}-1\\ 
\ell_{i+1}\end{matrix}\right]_p
\end{eqnarray}

This determines $A$ entirely, with the initial conditions:
\begin{eqnarray*}
{\rm Case \,(a)\, or\, (b)}:& A_{m_1}(\ell_1,\ell_2)=& \left[\begin{matrix}\ell_{1}+\ell_{2}-1\\ 
\ell_{2}\end{matrix}\right]_p
\\
{\rm Case (c)}:& A_{m_1,m_2,...,m_{\ell}}(t)=& \prod_{i=1}^{\ell-1}\left[\begin{matrix}\ell_{i}+\ell_{i+1}-1\\ 
\ell_{i+1}\end{matrix}\right]_p
\end{eqnarray*}
and the Theorem follows,
with $A_\bm(\ell_1,...,\ell_{2r+1})=A_{m_1,...,m_r,m_r}(\ell_1,...,\ell_{2r+1},0)$, as the last step
has $y_{2r+2}=0$.
\end{proof}

Extracting the coefficient of $t^n$ in the series of Theorem \ref{toe}, and using the expressions for
the weights given by Theorem \ref{ysol}, we immediately deduce:
\begin{cor}
The solution $R_{1,n}$ of the quantum $A_r$ $Q$-system is expressed as a Laurent polynomial of
any admissible initial data, with coefficients in $\Z_+[q,q^{-1}]$.
\end{cor}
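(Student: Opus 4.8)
The plan is to read off $R_{1,n}$ directly from the explicit series of Theorem~\ref{toe} and to check that every ingredient of that series is manifestly a Laurent polynomial in the initial data with coefficients in $\Z_+[q,q^{-1}]$. Fix an admissible initial datum, i.e.\ a Motzkin path $\bm$, and recall from \eqref{Fmdef} that $[t^n]F_\bm(t)=R_{1,n+m_1}R_{1,m_1}^{-1}$. Thus for $N\ge m_1$ one has $R_{1,N}=\big([t^{N-m_1}]F_\bm(t)\big)\,R_{1,m_1}$, and since $R_{1,m_1}$ is itself one of the cluster variables of $\bx_\bm$, it suffices to prove the claim for the coefficient $[t^{N-m_1}]F_\bm(t)$. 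First I would observe that the extraction of this coefficient produces a \emph{finite} sum: each monomial $(ty_{2r+1})^{\ell_{2r+1}}\cdots(ty_1)^{\ell_1}$ in Theorem~\ref{toe} carries total $t$-degree $\ell_1+\cdots+\ell_{2r+1}$, so only the finitely many tuples with $\sum_i\ell_i=N-m_1$ contribute. Hence $[t^{N-m_1}]F_\bm(t)$ is a genuine finite combination of ordered products $A_\bm(\ell_1,\dots,\ell_{2r+1})\,y_{2r+1}^{\ell_{2r+1}}\cdots y_1^{\ell_1}$, and the whole problem reduces to analysing one such term.

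Next I would treat the two factors of each term separately. The coefficient $A_\bm(\ell_1,\dots,\ell_{2r+1})$ is, by its definition in Theorem~\ref{toe}, a product of quantum multinomial coefficients in the variable $p=q^{r+1}$. These Gaussian multinomials are well known to be polynomials in $p$ with non-negative integer coefficients (and the boundary convention $\left[\begin{smallmatrix}-1\\0\end{smallmatrix}\right]_p=1$ does not spoil this), so $A_\bm\in\Z_+[p]\subset\Z_+[q]\subset\Z_+[q,q^{-1}]$. For the monomial factor, Theorem~\ref{ysol} expresses each weight $y_i(\bm)$ as a single Laurent monomial in the initial data $R_{i,m_i},R_{i,m_i+1}$, with a coefficient that is a pure power of $q$ (such as the prefactor $q^{i-1}$ for the odd weights). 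Therefore $y_{2r+1}^{\ell_{2r+1}}\cdots y_1^{\ell_1}$ is a product of Laurent monomials in the initial data, and using the commutation relations \eqref{qcom} to bring it to a fixed canonical ordering only multiplies it by a power of $q$ with coefficient $+1$. The outcome is a single Laurent monomial in the initial data times an element $q^{k}\in\Z_+[q,q^{-1}]$.

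Assembling the two pieces, each term $A_\bm\,y_{2r+1}^{\ell_{2r+1}}\cdots y_1^{\ell_1}$ is a Laurent monomial in the initial data with coefficient in $\Z_+[q,q^{-1}]$, and since $\Z_+[q,q^{-1}]$ is closed under addition and multiplication, the finite sum $[t^{N-m_1}]F_\bm(t)$ is a Laurent polynomial in the initial data with coefficients in $\Z_+[q,q^{-1}]$. Multiplying on the right by the cluster variable $R_{1,m_1}$ (again only reordering via \eqref{qcom}, which contributes $q$-monomial factors of coefficient $+1$) yields the same conclusion for $R_{1,N}$ itself. The one point that requires genuine care---rather than the positivity of the combinatorial coefficients, which is classical---is the reordering step: I would verify that collecting the $q$-commuting monomials $y_i^{\ell_i}$ (and finally $R_{1,m_1}$) into canonical Laurent-monomial form introduces only a single overall power of $q$ and never a genuine polynomial nor a cancellation, so that positivity holds manifestly term by term. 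Granting this, no negative coefficients can ever arise, and the corollary follows.
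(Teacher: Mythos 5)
Your proposal is correct and follows exactly the paper's (very compressed) argument: the paper likewise deduces the corollary by extracting the coefficient of $t^n$ from the series of Theorem~\ref{toe} and substituting the Laurent-monomial expressions of Theorem~\ref{ysol} for the weights $y_i(\bm)$. Your write-up merely makes explicit the points the paper leaves implicit—finiteness of the coefficient extraction, positivity of the Gaussian multinomials in $p=q^{r+1}$, and the fact that reordering $q$-commuting Laurent monomials via \eqref{qcom} contributes only a single power of $q$—so there is no gap and no genuine divergence from the paper's route.
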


%\medskip
%\noindent{\bf Acknoledgments.} We would like to thank R. Kedem for numerous discussions.
%We also thank S. Fomin for hospitality at the University of Michigan
%and many discussions while this work was completed.
%We received partial support from the ANR Grant GranMa, the
%ENIGMA research training network MRTN-CT-2004-5652,
%and the ESF program MISGAM. 

%\begin{example}
%\end{example}

\end{document}